\newcommand{\Z}{\phantom{-}}
\def\bra#1{\langle#1|} \def\ket#1{|#1\rangle}
\def\ketbra#1#2{\ket{#1}\!\bra{#2}}
\DeclareMathOperator\Tr{Tr}
\newtheorem{theo}{Theorem}
\newtheorem{prop}[theo]{Proposition}
\newtheorem{mainresult}{Main Result}
\def\bra#1{\langle#1|} \def\ket#1{|#1\rangle}
\def\ketbra#1#2{\ket{#1}\!\bra{#2}}
\newcommand{\st}{\mathrm{s.t.}}
\newcommand{\id}{\mathds{1}}
\def\one{\leavevmode\hbox{\small1\normalsize\kern-.33em1}}
\newcommand{\Perm}[1]{\mathcal P{#1}}
\newcommand{\sign}{\text{sgn}}
\begin{document}

\title{Bound entanglement-assisted prepare-and-measure scenarios based on four-dimensional quantum messages}

\author{Istv\'an M\'arton\,\orcidlink{0000-0001-7024-8245}}
\affiliation{HUN-REN Institute for Nuclear Research, P.O. Box 51, H-4001 Debrecen, Hungary} 

\author{Erika Bene\,\orcidlink{0000-0002-8073-7525}}
\affiliation{HUN-REN Institute for Nuclear Research, P.O. Box 51, H-4001 Debrecen, Hungary} 

\author{Tam\'as V\'ertesi\,\orcidlink{0000-0003-4437-9414}}
\affiliation{HUN-REN Institute for Nuclear Research, P.O. Box 51, H-4001 Debrecen, Hungary}

\begin{abstract}
We present a class of linear correlation witnesses that detects bound entanglement within a three-party prepare-and-measure scenario with four-dimensional quantum messages. We relate the detection power of our witnesses for two-ququart Bloch-product-diagonal states to that of the computable cross norm-realignment (CCNR) criterion. Several bound entangled states in four or even higher dimensions, including those which are useful in metrology, can exceed the separable bound computed by reliable iterative methods. In particular, we show that a prominent two-ququart bound entangled state with a positive partial transpose (PPT) can be mixed with up to $40\%$ isotropic noise and still be detected as entangled by our prepare-and-measure witness. Furthermore, our witnesses appear to be experimentally practical, requiring only the use of qubit rotations on Alice's and Bob's sides and product qubit measurements with binary outcomes on Charlie's side.
\end{abstract}

\maketitle

\section{Introduction}

The ability to prepare, transform and measure entangled quantum states is of fundamental importance in quantum information science, which forms the basis for the development of quantum technologies~\cite{Guhne2009,Horodecki2009}. One of the most puzzling forms of quantum correlations -- bound entanglement -- was first introduced by the Horodecki family in 1998. These are correlations that cannot be produced by local operations and classical communication (LOCC). On the other hand, no pure entanglement can be extracted from them by LOCC operations, even if arbitrary copies are available~\cite{Horodecki1998mixed}. Nevertheless, it has been shown that bound entanglement is useful and provides a resource for quantum information processing. For instance, it is relevant in quantum many-body systems~\cite{Toth2007} and has proven valuable in several applications, including quantum key distribution~\cite{Horodecki2005,Ozols2014}, quantum metrology~\cite{Czekaj2015,Toth2018,Pal2021}, channel discrimination~\cite{Piani2009}, activation of quantum resources~\cite{Horodecki1999bound,Masanes2006,Tendick2020,Toth2020activating}, Einstein-Podolsky-Rosen (EPR) steering~\cite{Moroder2014steering} and Bell nonlocality~\cite{Vertesi2014}. For a recent review of bipartite bound entanglement, see Ref.~\cite{Hiesmayr2024}. 

However, in the case of Bell correlations~\cite{Brunner2014Bell}, the amount of Bell violation obtained using bipartite bound entangled states with positive partial transpose (PPT)~\cite{Peres1996} has so far been very small~\cite{Vertesi2014,Yu2017,Pal2017family}. In contrast, bipartite PPT states can generate quantum correlations in dimension-bounded prepare-and-measure (PM) scenarios, leading to significant violations of classical bounds. This surprising property was recently demonstrated by Carceller and Tavakoli~\cite{Carceller2025PRL} in a PM setup similar to the one introduced in Ref.~\cite{Bakhshinezhad2024}.

In fact, in prepare-and-measure scenarios with dimension-bounded quantum messages, entanglement enables correlations that are much stronger than those achievable without shared entanglement~\cite{Tavakoli2021correlations}. An emblematic example is superdense coding~\cite{Bennett1992communication}, where the communication capacity of a single qubit can be doubled if the sender and receiver share a two-qubit entangled state. This protocol has recently been reformulated within the prepare-and-measure framework~\cite{Moreno2021,Tavakoli2018semi}, and a distributed three-party variant has also been proposed~\cite{Bowles2015testing}. 

The PM construction presented in Refs.~\cite{Bakhshinezhad2024, Carceller2025PRL} has two key properties. First, it is a three-party black-box communication protocol involving two senders and one receiver. Second, in addition to a shared entangled resource between the senders, it enables them to transmit $d$-dimensional quantum messages (where $d$ is prime or an odd prime) to the receiver, rather than classical messages.

Both features appear crucial to obtain strong violations of the classical-separable bounds with only weakly entangled states. In contrast, in a bipartite PM setup despite its entanglement signaling power there are indications that it still cannot close the so-called Werner gap. In particular, there exist a range of mixedness in Werner's family~\cite{Werner1989} for which the state remains entangled yet is undetectable by any possible set of adaptive product measurements in the bipartite PM scenario~\cite{Bakhshinezhad2024}. 

In this work, we introduce a class of correlation witnesses specifically tailored to detect PPT bound entanglement with high robustness to noise within a three-party prepare-and-measure scenario. While our construction shares some features with the witness of Carceller and Tavakoli~\cite{Carceller2025PRL}, notably, the use of dimensionally bounded quantum messages in a three-party PM setting and product measurements, it also differs from theirs in several key aspects.

Crucially, Carceller and Tavakoli~\cite{Carceller2025PRL} construct their witness for arbitrary odd-prime message dimensions $d$, providing exact results for $d=3,5,7$. In contrast, our witness works specifically for $d=4$. Remarkably, Carceller and Tavakoli using a numerical search can detect very weakly entangled $d\times d$ PPT states: their PM witness tolerates isotropic noise up to $18.83\%$, $28.53\%$, and $28.69\%$ for $d=3,5,7$, respectively. 

By reproducing the optimal states for $d=3,5,7$ from Ref.~\cite{Carceller2025PRL}, we verify that each can be written in the $d\times d$ generalized Bell-diagonal basis consisting of a set of $d^2$ maximally entangled states~\cite{Bennett1993,Sych2009}. Moreover, when the above stated isotropic noise levels are added to these PPT states, the resulting mixtures turn out to sit only marginally above the entanglement threshold witnessed by the computable cross norm-realignment (CCNR) criterion~\cite{Rudolph2005,Chen2003}. So, for these states, the CCNR criterion is only slightly more powerful in detecting entanglement than the Carceller-Tavakoli correlation witness. We also note that in this setup, which uses odd-prime $d$ dimensional messages, the receiver's station performs $d$-outcome measurements instead of the usual two-outcome ones. 

On the other hand, our correlation witness is specifically built for a $4$-dimensional message space, which can be tailored to target generic entangled states. This allows us to detect broad families of weakly entangled $4\times 4$ PPT states. By requiring only simple Pauli rotations and product Pauli measurements, it provides a versatile tool for entanglement detection under realistic, noisy conditions. Indeed, our reliable numerical analysis reveals that every $4\times 4$ Bloch-product-diagonal PPT state identified by the CCNR criterion is also caught by our witness. By contrast, for generic $4\times 4$ states, our witness matches the efficiency of the so-called trace criterion. These observations are formalized as the central findings of this work in Main Result~\ref{theorem1}. In addition, unlike the PM witness of Ref.~\cite{Carceller2025PRL}, our individual Pauli measurements do not require post-processing of the outcomes depending on the input $z$; Charlie simply deterministically wires his outcomes to produce the final output $c=\pm 1$.

Furthermore, we show that our construction can detect bound entangled states in any dimension from $D=3$ up to $D\to\infty$, and can even outperform the CCNR criterion for $D>4$. However, as opposed to the Carceller-Tavakoli witness~\cite{Carceller2025PRL}, which generalizes to arbitrary odd-prime dimensions (modulo the conjectured separable bound for $d>7$), our method does not directly extend to arbitrary message dimensions, including odd primes or their products. This limitation arises from the special group-theoretic properties underlying our design, which we examine in detail in the following sections.

\section{Scenario}

We consider a three-party black-box scenario in which the quantum messages are restricted to dimension four ($d=4$). Let the three parties, Alice, Bob and Charlie select their inputs $x$, $y$, and $z$, respectively. Each of these inputs is a pair of numbers, where Alice's input is given by $x=(x_0,x_1)\in\{0,1,2,3\}^2$. Similarly, Bob's and Charlie's inputs are $y=(y_0,y_1)$ and $z=(z_0,z_1)$. We define an index $x=4x_0+x_1+1$, so that $x$ runs from 1 to 16, enumerating the pairs ($x_0,x_1$) in lexicographic order. Similar indexing is used for Bob's and Charlie's input. In our setup, Charlie's measurement is binary for each input $z$ producing an output $c=\{\pm 1\}$. Fig.~\ref{fig1setup} provides an illustration of the scenario.

In the entanglement based model, the probability that Charlie obtains outcome $c$ given the inputs $x$, $y$, and $z$ is given by
\begin{equation}
P(c|x,y,z)=\Tr(\rho_{AB}^{(x,y)}M_{c|z}),
\end{equation}
where the transformed states by Alice and Bob conditioned on $(x,y)$ are
\begin{equation}
\label{rhoabxy}
\rho_{AB}^{(x,y)}
\;=\;
\bigl(\mathcal{E}_x\otimes \mathcal{F}_y\bigr)\bigl(\rho_{AB}\bigr),
\end{equation}
where $\mathcal{E}_x$ on $A$ and $\mathcal{F}_y$ on $B$ are a pair of local completely positive trace-preserving (CPTP) maps into a four-dimensional message space. The Hermitian operators $M_{c|z}$ are the elements of Charlie's positive operator-valued measure (POVM) that add up to the identity, $\sum_{c=\pm1}M_{c|z}=\id_4$ for every input $z$. Because the measurement is binary and $P(+1|x,y,z)=1-P(-1|x,y,z)$, it is useful to define the expectation value $E_{xyz}=P(+1|x,y,z)-P(-1|x,y,z)$. In the quantum scenario, the correlator can be expressed as
\begin{equation}
E_{xyz}=\Tr{\left[\rho_{AB}^{(x,y)}\,C_z\right]},    
\label{EQxyz}
\end{equation}
where the $4\times 4$ state $\rho_{AB}^{(x,y)}$ is given by Eq.~(\ref{rhoabxy}) and $C_z$ is Charlie's dichotomic observable (with eigenvalues $\pm 1$) for input $z$. This captures all the information which is available from a prepare-and-measure experiment with the topology shown in Fig.~\ref{fig1setup}. 

\begin{figure}[t!]
\includegraphics[trim=35 265 135 20,clip,width=12cm]{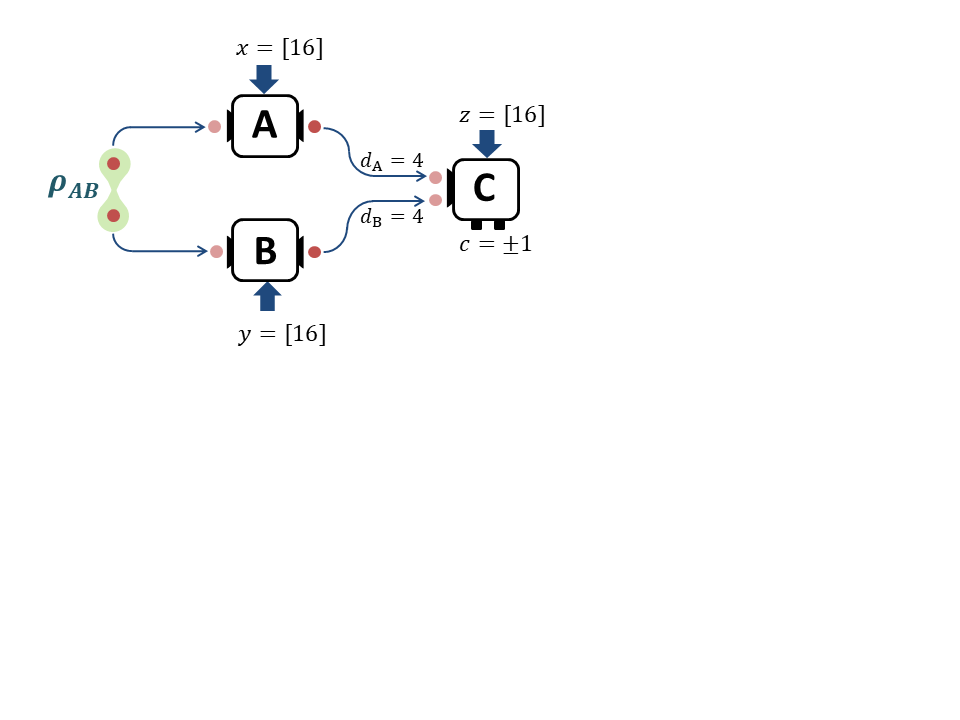}
\caption{Three-party prepare-and-measure scenario. When entanglement is allowed, Alice and Bob may share any bipartite quantum state $\rho_{AB}$. They encode their respective inputs $x,y\in\{1,\ldots,16\}$ into four-dimensional quantum messages ($d_A=d_B=4$), which they send to Charlie. Upon receiving his own input $z\in\{1,\ldots,16\}$, Charlie performs a measurement and outputs $c=\pm 1$. In the unentangled case, $\rho_{AB}$ is restricted to be separable, taking the form \eqref{rhosep}.} 
\label{fig1setup}
\end{figure}

We distinguish between two completely different cases based on the property of the quantum state shared by Alice and Bob. In the general case, $\rho_{AB}$ is an arbitrary bipartite state. In contrast, the unentangled or separable case is characterized by states that can be written as a convex mixture of product states~\cite{Horodecki2009,Guhne2009}
\begin{equation}
\rho_{AB}=\sum_k p_k\rho_k^A\otimes \rho_k^B, 
\label{rhosep}
\end{equation}
where the $p_k$ weights sum up to unity. A state that cannot be written in this form is entangled. To certify entanglement in the prepare-and-measure scenario, we make use of linear functionals built from correlators $E_{xyz}$. Specifically, we consider inequalities of the form 
\begin{equation}
    \sum_{x,y,z} w_{xyz}E_{xyz}\le Q_{\text{sep}},
    \label{wit}
\end{equation}
with some fixed real coefficients $w_{xyz}$, where the value $Q_{\text{sep}}$ is chosen so that the bound holds for all separable states~(\ref{rhosep}), for any choice of CPTP maps $\mathcal{E}_x$, $\mathcal{F}_y$ and observables $C_z$ in the correlator \eqref{EQxyz}. By linearity, it suffices to optimize over pure product states in $\mathbb{C}^4\otimes\mathbb{C}^4$. Hence the separable bound can be written as  
\begin{equation}
Q_{\mathrm{sep}}
=\max_{\rho_x^A,\rho_y^B, C_z} \sum_{x,y,z} w_{xyz}\Tr(\rho_x^A\otimes\rho_y^B C_z),
    \label{witsep}
\end{equation}
where $\rho_x^A$ and $\rho_y^B$ are arbitrary four-dimensional pure states.

In what follows, we construct a correlation inequality in the form of (\ref{wit}), where the witness coefficients $w_{xyz}$ are tailored to the specific entangled state we wish to detect. To achieve this, we first introduce a class of $4\times 4$ quantum states given by their Bloch-decomposition parameters. We then express the witness coefficients in terms of these parameters. We provide strong numerical evidence that any $4\times 4$ Bloch-diagonal state violates our correlation inequality whenever it violates the CCNR criterion. Furthermore, we extend our analysis to non-Bloch-diagonal states, where the CCNR value is replaced by the Manhattan norm of the diagonal elements of the correlation tensor of the state. Our central finding, encompassing these statements, is presented in Main Result~\ref{theorem1}.

\section{A class of Bloch-diagonal states}

A generic $4\times 4$ state $\rho_{AB}$ can be expanded in a Bloch-product-operator-basis as 
\begin{equation}
\rho_{AB}  = \sum_{k,\ell=1}^{16} t_{k\ell}\,A_{k}\otimes B_{\ell}\,,
\label{rhodecomp}
\end{equation}
where the real coefficients 
\begin{equation}
\label{tkl}
    t_{k\ell}=\Tr(\rho_{AB}A_k\otimes B_\ell)
\end{equation}
are the entries of the correlation matrix $T = (t_{k\ell})$. Above the operators $A_k$ are defined by 
\begin{equation}
\label{Ak}
A_k=\frac{1}{2}\sigma_{k_0}\otimes\sigma_{k_1},\quad k=(k_0,k_1)\in\{0,1,2,3\}^2,
\end{equation}
with $\sigma_0=\id$, $\sigma_1=X$, $\sigma_2=Y$, and $\sigma_3=Z$ being the Pauli matrices. On the other hand, one may choose the operators $B_k$ on $B$ to be a permutation of operators $A_k$ on $A$. Concretely, let $\Perm{}$ be a permutation of the index set $\{1,\dots,16\}$, and define $B_k = A_{\Perm(k)}, k=1,\dots,16$.

We consider a class of $4\times 4$ states that can be expressed in a Bloch-product-diagonal form (or Bloch-diagonal, for short) as
\begin{equation}
\rho_{AB}=\sum_{k=1}^{16}\lambda_k A_k\otimes B_k.\label{rhobloch}
\end{equation}
Note that $\lambda_k$ can be negative and in this case $t_{k\ell}=\lambda_k\delta_{k\ell}$ in the decomposition~(\ref{rhodecomp}). Note that the sets $\{A_k\}_{k=1}^{16}$ and $\{B_k\}_{k=1}^{16}$ form orthonormal bases of $4\times 4$ Hermitian operators. That is, 
\begin{equation}
\Tr{\left(A_kA_\ell\right)}=\Tr{\left(B_kB_\ell\right)=\delta_{k\ell}}.  
\end{equation}
In the special case of identity permutation, defined by $\Perm{(k)}=k$ for all $k$, we have $B_k=A_k$. The coefficients $\lambda_k$ are real (and may take negative values) and are chosen so that $\rho_{AB}$ is a valid quantum state, namely, it must be positive ($\rho_{AB}\succeq 0$) and normalized ($\Tr{(\rho_{AB})}=1$). For instance, if $\Perm{(1)}=1$, then $B_1=A_1$, then we have $\lambda_1=1/4$. Such Bloch-diagonal states have been discussed in Refs.~\cite{Moroder2012calibration,Sentis2018}. 

Note that any $d\times d$ state, in particular any $4\times 4$ state, can be written in its Schmidt form as
\begin{equation}
\rho_{AB}=\sum_{k=1}^{d^2}\tilde{\lambda}_k \tilde{A}_k\otimes\tilde{B}_k,
\label{rhoschmidt}
\end{equation}
with $\tilde{\lambda}_k>0$, where $\{\tilde{A}_k\}$ and $\{\tilde{B}_k\}$ are orthonormal bases of Hermitian operators satisfying $\Tr\bigl(\tilde{A}_k\,\tilde{A}_\ell\bigr)
= \Tr\bigl(\tilde{B}_k\,\tilde{B}_\ell\bigr) = \delta_{k\ell}$, and $\tilde{\lambda}_k=\Tr{(\rho_{AB}\tilde{A}_k\otimes\tilde{B}_k)}$. Even for $d=4$, the operators $\tilde{A}_k$ and $\tilde{B}_k$ need not factor into Pauli matrices. Moreover, the Bloch decomposition in Eq.~(\ref{rhobloch}) can be brought to a Schmidt form by absorbing the signs of the $\lambda_k$ into the corresponding $B_k$ operators. 

We now recall the so-called CCNR criterion for entanglement detection in $d\times d$ systems.

\begin{prop}[CCNR Criterion {\cite{Rudolph2005,Chen2003}}]
\label{prop1}
Let a bipartite state \(\rho_{AB}\) admit the Schmidt decomposition \eqref{rhoschmidt}. Define the CCNR value by $\mathrm{CCNR}(\rho_{AB})
\;\equiv\;\sum_{k=1}^{d^2} \tilde{\lambda}_k$.
For any separable state $\rho_{AB}$, we have 
$\mathrm{CCNR}(\rho_{AB}) \le 1$. Violation of this bound certifies that $\rho_{AB}$ is entangled.
\end{prop}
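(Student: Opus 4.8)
The plan is to recognize the CCNR value as the trace (nuclear) norm of the correlation matrix and then exploit the convexity of this norm together with a direct computation on product states.

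First I would observe that the operator Schmidt decomposition~(\ref{rhoschmidt}) is nothing but the singular value decomposition of the coefficient matrix $T=(t_{k\ell})$ obtained by expanding $\rho_{AB}=\sum_{k,\ell} t_{k\ell}\,A_k\otimes B_\ell$ in any fixed pair of orthonormal Hermitian operator bases $\{A_k\}$, $\{B_\ell\}$, such as those in~(\ref{Ak}). Writing $T=U\Sigma V^{\top}$ with $U,V$ orthogonal and $\Sigma=\mathrm{diag}(\tilde{\lambda}_1,\dots,\tilde{\lambda}_{d^2})$, the rotated families $\tilde{A}_k=\sum_j U_{jk}A_j$ and $\tilde{B}_k=\sum_j V_{jk}B_j$ are again orthonormal, and the $\tilde{\lambda}_k$ are exactly the (nonnegative) singular values of $T$. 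Hence $\mathrm{CCNR}(\rho_{AB})=\sum_k\tilde{\lambda}_k=\|T\|_{1}$, a quantity that is manifestly independent of the chosen bases.

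Next I would reduce the problem to product states by exploiting convexity. For a product state $\rho_A\otimes\rho_B$ the coefficients factorize, $t_{k\ell}=\Tr(\rho_A A_k)\,\Tr(\rho_B B_\ell)=a_k b_\ell$, so that $T=a\,b^{\top}$ is rank one with a single nonzero singular value $\|a\|_2\,\|b\|_2$. Using completeness of the orthonormal operator bases, $\|a\|_2^2=\sum_k\Tr(\rho_A A_k)^2=\Tr(\rho_A^2)\le 1$ and likewise $\|b\|_2^2=\Tr(\rho_B^2)\le 1$, so that $\mathrm{CCNR}(\rho_A\otimes\rho_B)=\sqrt{\Tr(\rho_A^2)\,\Tr(\rho_B^2)}\le 1$, with the purity bound $\Tr(\rho^2)\le 1$ doing the work.

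Finally, writing a separable state as $\rho_{AB}=\sum_j p_j\,\rho_j^A\otimes\rho_j^B$ gives $T=\sum_j p_j T_j$, where each $T_j$ is rank one with $\|T_j\|_1\le 1$ by the previous step. The triangle inequality for the trace norm then yields $\mathrm{CCNR}(\rho_{AB})=\|T\|_1\le\sum_j p_j\|T_j\|_1\le\sum_j p_j=1$, establishing the claim. The one step demanding care is the identification of the Schmidt coefficients with the singular values of $T$, and the attendant basis-independence of $\sum_k\tilde{\lambda}_k$; once this is settled, the bound follows purely from the triangle inequality and the purity bound.
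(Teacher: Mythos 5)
Your proposal is correct and complete. The paper itself offers no proof of Proposition~\ref{prop1} --- it is quoted as a known result from the cited literature --- and your argument is precisely the standard proof of the CCNR (realignment) criterion given there: identify $\mathrm{CCNR}(\rho_{AB})=\sum_k\tilde{\lambda}_k$ with the trace norm $\|T\|_1$ of the correlation matrix (the operator Schmidt coefficients being its singular values, hence basis-independent), note that a product state gives a rank-one $T=ab^{\top}$ with $\|T\|_1=\sqrt{\Tr(\rho_A^2)\Tr(\rho_B^2)}\le 1$ by Parseval and purity, and conclude for separable states by the triangle inequality. The one step you flag as delicate --- that real-orthogonal rotations of Hermitian operator bases preserve Hermiticity and orthonormality, so the singular value decomposition of the real matrix $T$ yields a valid operator Schmidt form --- is handled correctly, and it is also the ingredient the paper implicitly relies on later when comparing $\mathrm{CCNR}$ with the trace criterion via the inequality \eqref{JvN}.
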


Applying Proposition~\ref{prop1} to the Bloch-diagonal family~(\ref{rhobloch}), the violation of the CCNR criterion $\sum_{k=1}^{16}|\lambda_k|\le 1$ detects entanglement in any state of the form~(\ref{rhobloch}). Although this Bloch-diagonal class does not encompass all possible $4\times 4$ entangled states, as we will see, it does include many with rich entanglement properties, including bound entangled states.

\begin{prop}[Trace Criterion {\cite[Sec.\,2.5.1]{Guhne2009}}]
\label{prop1b}
Let us express a $4\times 4$ state $\rho_{AB}$ in the decomposition~\eqref{rhodecomp}, where the coefficients $t_{k\ell}$ are given by \eqref{tkl}. Define the sum 
\begin{equation}
\label{trS}
S(\rho_{AB}) \;\equiv\; \sum_{k=1}^{16} \bigl|t_{kk}\bigr|.    
\end{equation}
Then for any separable state $\rho_{AB}$, we have $S(\rho_{AB}) \le 1$.  Consequently, violation of this bound certifies that $\rho_{AB}$ is entangled.
\end{prop}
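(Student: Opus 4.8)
The plan is to exploit two structural facts that are already available: that each diagonal correlator $t_{kk}=\Tr(\rho_{AB}\,A_k\otimes B_k)$ depends \emph{linearly} on $\rho_{AB}$, so that $S(\rho_{AB})=\sum_k|t_{kk}|$ is a convex functional of the state, and that $\{A_k\}$ and $\{B_k\}$ are orthonormal operator bases with $\Tr(A_kA_\ell)=\delta_{k\ell}$. Convexity reduces the separable maximum to product states: writing a separable $\rho_{AB}=\sum_j p_j\,\rho_j^A\otimes\rho_j^B$, linearity gives $t_{kk}(\rho_{AB})=\sum_j p_j\,t_{kk}(\rho_j^A\otimes\rho_j^B)$, hence $|t_{kk}(\rho_{AB})|\le\sum_j p_j\,|t_{kk}(\rho_j^A\otimes\rho_j^B)|$, and summing over $k$ yields $S(\rho_{AB})\le\sum_j p_j\,S(\rho_j^A\otimes\rho_j^B)$. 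It therefore suffices to prove $S\le 1$ on product states.

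The second step is to evaluate $S$ on a single product state $\rho_A\otimes\rho_B$, where the diagonal correlators factor. Using \eqref{tkl} together with $B_k=A_{\Perm{(k)}}$, I would write $t_{kk}=\Tr(\rho_A A_k)\,\Tr(\rho_B B_k)\equiv a_k b_k$, where $a_k$ and $b_k$ are the (real) Bloch components of the two reduced states in their respective orthonormal bases. The Cauchy--Schwarz inequality then gives $S(\rho_A\otimes\rho_B)=\sum_k|a_k b_k|\le\bigl(\sum_k a_k^2\bigr)^{1/2}\bigl(\sum_k b_k^2\bigr)^{1/2}$.

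The third step closes the estimate by bounding each factor through the purity. Since $\{A_k\}$ is orthonormal and $\rho_A=\sum_k a_k A_k$, Parseval's identity yields $\sum_k a_k^2=\Tr(\rho_A^2)\le 1$, the inequality being the standard purity bound for a density operator; the identical argument applies to $\{B_k\}$ and $\rho_B$. Combining with the second step gives $S(\rho_A\otimes\rho_B)\le 1$, which together with the convexity reduction establishes the proposition.

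As a cross-check, I would note an equivalent route that exhibits the trace criterion directly as a corollary of Proposition~\ref{prop1}: one has $S(\rho_{AB})=\Tr(DT)$ with $D=\mathrm{diag}(\sgn t_{kk})$ a $\pm 1$ orthogonal matrix, so that $S\le\|D\|_\infty\,\|T\|_1=\|T\|_1=\mathrm{CCNR}(\rho_{AB})\le 1$ for separable states. Either way there is no genuine obstacle; the only points demanding care are that the $t_{kk}$ are truly real (guaranteed by the Hermiticity of $\rho_{AB}$, $A_k$ and $B_k$, so that $D$ is orthogonal), and that the Parseval step is applied with the correct normalization $\Tr(A_kA_\ell)=\delta_{k\ell}$ rather than an unnormalized Pauli basis.
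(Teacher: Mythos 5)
Your proof is correct, and your primary argument takes a genuinely different route from the paper's. The paper obtains the trace criterion as an immediate corollary of the CCNR criterion (Proposition~\ref{prop1}) together with a special case of von Neumann's trace inequality~\cite{Mirsky1975}: the sum of the singular values of the correlation matrix $T$ dominates the sum of the absolute values of its diagonal entries, so $S(\rho_{AB})\le\mathrm{CCNR}(\rho_{AB})\le 1$ for separable states --- which is precisely your ``cross-check'' route. Your main argument instead proves the bound from first principles: convexity of $S$ reduces the claim to product states, where $t_{kk}=a_k b_k$ factorizes, and Cauchy--Schwarz combined with Parseval's identity and the purity bounds $\Tr(\rho_A^2)\le 1$, $\Tr(\rho_B^2)\le 1$ closes the estimate. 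This is more elementary and self-contained (it needs neither Proposition~\ref{prop1} nor Mirsky's inequality), it makes transparent that the constant $1$ is just a purity bound, and it generalizes verbatim to arbitrary local dimensions and arbitrary orthonormal Hermitian product bases. What the paper's route buys in exchange is the intermediate inequality $S(\rho_{AB})\le\mathrm{CCNR}(\rho_{AB})$ of Eq.~\eqref{JvN}, with equality exactly in the Bloch-diagonal case; that comparison is not a by-product of your Cauchy--Schwarz estimate, and the paper reuses it repeatedly afterwards (in the observations following Proposition~\ref{prop3} and in Main Result~\ref{theorem1}). One small point of care in your cross-check: if some $t_{kk}=0$ then $D=\mathrm{diag}(\sgn t_{kk})$ is not orthogonal, but the H\"older step only needs $\|D\|_\infty\le 1$ (or redefine the vanishing diagonal entries of $D$ to $+1$), so nothing breaks.
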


This proposition follows directly from Proposition~\ref{prop1} together with a special case of von Neumann's trace inequality~\cite{Mirsky1975}, which for any $4\times 4$ state $\rho_{AB}$ gives
\begin{equation}
\label{JvN}
\mathrm{CCNR}(\rho_{AB})
= \sum_{k=1}^{16} \tilde{\lambda}_k
\;\ge\;
\sum_{k=1}^{16} \bigl|t_{kk}\bigr|=S(\rho_{AB}).    
\end{equation}
Accordingly, the sum of the singular values of the matrix $T=(t_{k\ell})$ is always at least the sum of the absolute values of its diagonal entries, with equality if and only if $\rho_{AB}$ is Bloch-diagonal (i.e.,\ \(t_{k\ell}=\pm\delta_{k\ell}\,\tilde{\lambda}_k\)). Note that $S(\rho_{AB})$ in Eq. \eqref{trS} is simply the Manhattan norm of the vector $\text{diag}(T)$.

Hence, while this trace-based criterion is generally weaker than the CCNR test for non-Bloch-diagonal states, it still detects a large portion of $4\times 4$ bound entangled states, as we will show. Moreover, for any fixed $\rho_{AB}$, one may further strengthen the entanglement detection by applying local unitaries on $\rho_{AB}$ to maximize $S$.

\section{Prepare-and-measure witnesses for message dimension four}

We now introduce a family of linear witnesses in the form of correlation inequalities~(\ref{wit}) which can be tailored to generic $4\times 4$ quantum states. In particular, these witnesses are optimally suited to the Bloch-diagonal class of states~(\ref{rhobloch}), as we shall see. 

Consider an arbitrary $4\times 4$ state $\rho_{AB}$ with decomposition~(\ref{rhodecomp}), where $t_{k\ell}$ and $A_k$ are defined by Eqs.~\eqref{tkl} and \eqref{Ak}, respectively, and $B_k=A_{\Perm{(k)}}$ for $k=1,\ldots,16$ under some fixed permutation $\Perm{}$. We then define the witness coefficients $w_{xyz}$ appearing in the correlation inequality~(\ref{wit}) by
\begin{equation}
   w_{xyz}
\,=\,
\operatorname{sgn}(t_{zz})
\Tr\bigl(A_x A_z A_x A_z\bigr)
\Tr\bigl(B_y B_z B_y B_z\bigr) 
\label{witcoeff}    
\end{equation}
for $x,y,z\in\{1,\ldots,16\}$. Here the sign function is defined as 
\begin{equation}
\sign(r) =
\begin{cases}
-1, & \text{if } r < 0 \\
\Z 0, & \text{if } r = 0 \\
+1, & \text{if } r > 0. 
\end{cases}    
\end{equation}    
Since $\left|\Tr{(A_xA_zA_xA_z)}\right|=\left|\Tr{(B_yB_zB_yB_z)}\right|=1/4$ holds for every choice of $x,y,z$, each coefficient $w_{xyz}$ takes on a value of either $+1/16$ or $-1/16$. 

In the subsections that follow, we show that the correlation witness \eqref{wit} with coefficients \eqref{witcoeff} can attain the value (see Prop.~\ref{prop3})
\begin{equation}
\label{Qent}
Q(\rho_{AB}) = 4^3 \sum_{z=1}^{16} \bigl|t_{zz}\bigr|,  
\end{equation}
whereas, in the absence of shared entanglement, the witness value is lower-bounded by (see Prop.~\ref{prop4})
\begin{equation}
Q_{\mathrm{sep}}^M = 4^3.
\label{QsepLB}
\end{equation}
In addition, using a reliable iterative algorithm, we conjecture that
\begin{equation}
Q_{\mathrm{sep}} \;\le\; 4^3    
\label{qsep}
\end{equation}
for every witness defined by \eqref{witcoeff}.  This bound can be saturated using four-dimensional unentangled quantum messages, either by pure product states or by mixtures of product states, as we shall demonstrate below. 

Moreover, noting that $Q(\rho_{AB})>Q_{\text{sep}}$ by Eqs. \eqref{trS}, \eqref{Qent} and \eqref{qsep} whenever $S(\rho_{AB})>1$, we formalize the main result of this paper in the following statement.

\begin{mainresult}
Consider a generic $4\times 4$ state expressed in the decomposition
$\rho_{AB}  = \sum_{k,l=1}^{16} t_{kl}\,A_{k}\otimes B_{\ell}$, where the real coefficients $t_{k\ell}=\Tr(\rho_{AB}A_k\otimes B_\ell)$, and the operators $A_k$ and $B_\ell$ are products of normalized Pauli matrices as in Eq. \eqref{Ak}. Then the correlation inequality $\sum_{xyz} w_{xyz}E_{xyz}\le Q_{\text{sep}}$ defined by the coefficients $w_{xyz}$ in Eq.~\eqref{witcoeff} can be violated whenever $S(\rho_{AB})=\sum_{k=1}^{16} \left|t_{kk}\right|>1$. 

Moreover, if $\rho_{AB}$ can be written in a Bloch-product-diagonal form, $\rho_{AB}=\sum_{k=1}^{16}\lambda_k A_k\otimes B_k$, then the correlation inequality $\sum_{xyz} w_{xyz}E_{xyz}\le Q_{\text{sep}}$ can be violated whenever $\text{CCNR}(\rho_{AB})>1$.
\label{theorem1}    
\end{mainresult} 

Note that Eqs.~\eqref{Qent}, \eqref{QsepLB}, and \eqref{qsep} form the basis of the main result above. Equations~\eqref{Qent} and \eqref{QsepLB} are proved in Sections~\ref{Qentsec} and \ref{Qsep}, respectively, while Eq.~\eqref{qsep}, supported by extensive numerical evidence, is demonstrated via an iterative algorithm described in Sec.~\ref{Qsep}.

\subsection{Entangled value}
\label{Qentsec}

To prove Eq.~\eqref{Qent} we take the local CPTP maps in Eq.~\eqref{rhoabxy} to be unitary channels generated by $U_x$ and $V_y$ on $\mathbb{C}^4$. Hence 
\begin{equation}
\label{rhoabxyU}
\rho_{AB}^{(x,y)}
\;=\; \bigl(U_x\otimes V_y\bigl)\rho_{AB}\bigl(U^{\dagger}_x\otimes V^{\dagger}_y\bigl).
\end{equation}

\noindent Then we have
\begin{prop}
\label{prop3}
Let $\rho_{AB}$ be expressed in the decomposition \eqref{rhodecomp} with coefficients $t_{k\ell}$ as in \eqref{tkl}. Choose the unitaries in \eqref{rhoabxyU} as
\begin{equation}
\label{UVC}
U_x = 2A_x,\quad
V_y = 2B_y,\quad    
\end{equation}
and set Charlie's two-ququart observables as 
\begin{equation}
C_z = 4\,A_z \otimes B_z.     
\end{equation}
Then the witness \eqref{wit} with coefficients \eqref{witcoeff} evaluates to
\begin{equation}
  Q(\rho_{AB}) \;=\; 4^3 \sum_{z=1}^{16} \bigl|t_{zz}\bigr|.
  \label{generic}
\end{equation}
\end{prop}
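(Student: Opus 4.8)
The plan is to substitute the explicit choices \eqref{UVC} together with $C_z = 4A_z\otimes B_z$ directly into the correlator \eqref{EQxyz}, reduce the resulting expression using cyclicity of the trace, and then exploit the fact that every operator in play is a (normalized) two-qubit Pauli string. First I would record that $U_x = 2A_x = \sigma_{x_0}\otimes\sigma_{x_1}$ is simultaneously Hermitian and unitary, so $U_x^\dagger = U_x$, and likewise for $V_y$; similarly $C_z=4A_z\otimes B_z$ is a product of single-qubit Paulis on four qubits, hence a legitimate $\pm1$ observable, which justifies the choice. Moving $C_z$ through the conjugation by cyclicity, the correlator collapses to
\begin{equation}
E_{xyz} = 4\,\Tr\!\bigl[\rho_{AB}\,(U_x^\dagger A_z U_x)\otimes(V_y^\dagger B_z V_y)\bigr].
\end{equation}

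The crux of the argument is a single algebraic identity. Because any two normalized two-qubit Pauli operators either commute or anticommute, I would introduce a sign $\epsilon_{xz}=\pm1$ via $A_x A_z = \epsilon_{xz} A_z A_x$, and analogously $\eta_{yz}=\pm1$ via $B_y B_z = \eta_{yz} B_z B_y$. Combined with $A_x^2 = \tfrac14\id$ this yields the conjugation rule $U_x^\dagger A_z U_x = 4A_x A_z A_x = \epsilon_{xz} A_z$, and likewise $V_y^\dagger B_z V_y = \eta_{yz} B_z$. The same sign governs the witness coefficient: since $A_x A_z A_x A_z = \tfrac{\epsilon_{xz}}{16}\id$, one obtains $\Tr(A_x A_z A_x A_z) = \tfrac{\epsilon_{xz}}{4}$, so $\operatorname{sgn}\Tr(A_x A_z A_x A_z)=\epsilon_{xz}$ with magnitude $1/4$ (and the corresponding statement for the $B$-factor). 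This is the heart of the construction and the step I expect to demand the most care, since the whole result hinges on verifying that the sign picked up by conjugating $A_z$ through $U_x$ is exactly the sign built into $w_{xyz}$.

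With these two facts in hand the remainder is bookkeeping. Substituting the conjugation rule gives $E_{xyz}=4\,\epsilon_{xz}\eta_{yz}\,t_{zz}$, using $t_{zz}=\Tr(\rho_{AB}A_z\otimes B_z)$ from \eqref{tkl}, while \eqref{witcoeff} reads $w_{xyz}=\tfrac{1}{16}\operatorname{sgn}(t_{zz})\,\epsilon_{xz}\eta_{yz}$. Forming the product $w_{xyz}E_{xyz}$, the sign factors square to unity, $\epsilon_{xz}^2=\eta_{yz}^2=1$, and $\operatorname{sgn}(t_{zz})\,t_{zz}=|t_{zz}|$, leaving a summand $\tfrac14|t_{zz}|$ that is independent of $x$ and $y$. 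Performing the $16\times16$ summation over $(x,y)$ produces the prefactor $16\cdot16/4=4^3$, so that $Q(\rho_{AB})=4^3\sum_{z=1}^{16}|t_{zz}|$, as claimed. I would close by noting that this sign-cancellation mechanism is precisely where the two-qubit Pauli group structure (hence $d=4$) is essential, consistent with the dimensional restriction flagged earlier.
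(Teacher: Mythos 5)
Your proof is correct and follows essentially the same route as the paper's: both substitute the stated unitaries and observables and reduce everything to the Pauli-conjugation identity $A_xA_zA_x=\pm\tfrac{1}{4}A_z$, whose sign is then cancelled against the identical sign built into $w_{xyz}$, leaving $\tfrac14|t_{zz}|$ per $(x,y)$ pair and hence the factor $16\cdot 16/4 = 4^3$. The only difference is presentational: you conjugate the observable (Heisenberg picture), so the trace against $\rho_{AB}$ picks out $t_{zz}$ directly from its definition, whereas the paper expands $\rho_{AB}$ in the operator basis and eliminates the off-diagonal terms via $\bigl|\Tr(A_xA_kA_xA_z)\bigr|=\tfrac{1}{4}\delta_{k,z}$ --- the same algebraic fact in a different packaging.
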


\begin{proof}
By $C_z = 4\,A_z \otimes B_z$, the expectation value in \eqref{EQxyz} becomes $E_{xyz}=4\Tr(\rho_{AB}^{(x,y)} A_z\otimes B_z)$. Substituting Eqs. \eqref{rhoabxyU},\eqref{UVC} and the expansion~(\ref{rhodecomp}) then gives
\begin{equation}
E_{xyz}
= 4^3 \sum_{k,\ell} t_{k\ell}\,
  \Tr\!\bigl[A_x A_k A_x A_z\bigr]\,
  \Tr\!\bigl[B_y B_\ell B_y B_z\bigr].    
\label{Exyztkl}  
\end{equation}
One then shows via the Pauli conjugation relations 
\begin{equation}
\sigma_i\,\sigma_j\,\sigma_i
=\begin{cases}
+\;\sigma_j, & i=j,\\
-\;\sigma_j, & i\neq j,
\end{cases}    
\label{pauliconj}
\end{equation}
that 
$|\Tr{(A_xA_kA_xA_z)}|=|\Tr{(B_yB_kB_yB_z)}|=(1/4)\delta_{k,z}$ for any $x,y$. Hence all off-diagonal terms vanish in Eq.~\eqref{Exyztkl} and we have
$E_{xyz}=4^3\times t_{zz}\Tr{(A_xA_zA_xA_z)}\Tr{(B_yB_zB_yB_z)}$. Finally, inserting the coefficients from Eq.~\eqref{witcoeff} into the witness $\sum_{xyz} w_{xyz}E_{xyz}$
yields $Q(\rho_{AB})=4^3\sum_{xyz}|t_{zz}|(\Tr{(A_xA_zA_xA_z)}\Tr{(B_yB_zB_yB_z)})^2=4^3\times\\\sum_z|t_{zz}|$. 
\end{proof}

\noindent Some observations follow.

\medskip

\noindent (i) Consider a Bloch-diagonal state $\rho_{AB}$ as in \eqref{rhobloch}.  In the decomposition \eqref{rhodecomp}, let us identify its diagonal components \(t_{zz}\) with coefficients \(\lambda_z\).  Then one immediately obtains
\begin{equation}\label{Qrhoab}
Q(\rho_{AB})
= 4^3 \sum_{z} \bigl|\lambda_{z}\bigr|
= 4^3 \times \mathrm{CCNR}(\rho_{AB})\,.
\end{equation}
Moreover, by \eqref{JvN}, for an arbitrary $4\times 4$ state one always has
\begin{equation}
\mathrm{CCNR}(\rho_{AB})
\;\ge\;
\sum_{z}\bigl|t_{zz}\bigr|,
\end{equation}
with equality precisely in the Bloch-diagonal case.

\medskip

\noindent (ii) Substituting $S(\rho_{AB})$ from Eq. \eqref{trS} into Eq.~\eqref{generic}, it follows that 
\begin{equation}
Q(\rho_{AB})=4^3\times\,S(\rho_{AB})\;\le\;Q_{\mathrm{ent}}.   
\end{equation}

\medskip

Furthermore, for a given $4\times 4$ state $\rho_{AB}$, the witness value may be driven closer to $Q_{\text{ent}}$ by optimizing over local basis rotations on Alice's and Bob's sides. Concretely, let
\begin{equation}
U'_x = U_A\,A_x,\qquad
V'_y = U_B\,B_y,    
\end{equation}
where $U_A$ and $U_B$ are local unitaries on Alice's and Bob's systems, respectively. Define the diagonal coefficients of the rotated correlation tensor by
\begin{equation}
t'_{zz}=\Tr\bigl[\rho_{\mathrm{AB}}\,(U_A A_z U_A^\dagger)\otimes(U_B B_z U_B^\dagger)\bigr].
\end{equation}
By maximizing $S'=\sum_{z=1}^{16}\left|t'_{zz}\right|$ over all choices of $U_A$ and $U_B$, one can potentially increase the achievable witness value.

\smallskip

\noindent Finally, one may ask whether more general encoding maps $\mathcal{E}_x$ for Alice and $\mathcal{F}_y$ for Bob, or more general observables $C_z$ for Charlie, can boost the witness value further. As we shall demonstrate, such enhancements are indeed possible when the dimension of the state space of $\rho_{AB}$ exceeds $4\times 4$.

\subsection{Separable value}
\label{Qsep}

With the coefficients~(\ref{witcoeff}) in the witness \eqref{wit} fixed by the target state~\eqref{rhodecomp}, our next task is to maximize the expression~(\ref{witsep}) over all two-ququart product states $\rho_x^A\otimes\rho_y^B$ and measurement observables $C_z$. Let us denote this maximum by $Q_{\text{sep}}$. First we establish the lower bound of $Q_{\text{sep}}^M=4^3$ as stated in Eq.~(\ref{QsepLB}). 

Let us first observe that in order to obtain $Q_{\text{sep}}$ for any witness of the form~\eqref{witcoeff} it is enough to consider a canonical witness with coefficients
\begin{equation}
    w_{xyz}=\Tr{(A_xA_zA_xA_z)}\Tr{(A_yA_zA_yA_z)}
\label{witcoeffcan}    
\end{equation}
for $x,y,z\in\{1,\ldots,16\}$, which is given by the parameters $\text{sgn}(t_{zz})=1$ and $B_z=A_z$ in \eqref{witcoeff}. This follows from the fact that, if we fix the witness by the coefficients \eqref{witcoeff} and assume that the product states $\rho_x^A\otimes\rho_y^B$ together with the measurement observables $C_z$ achieve $Q_{\text{sep}}$, then we can construct the permuted product states $\rho_x^A\otimes\rho_{\mathcal{P}(y)}^B$ and the measurement observables $\text{sgn}(t_{zz})C_z$ to achieve $Q_{\text{sep}}$ for the canonical witness \eqref{witcoeffcan}. Therefore, without loss of generality, we may now focus on determining the separable bound of this canonical witness. 

First we obtain the lower bound $Q_{\text{sep}}^M=4^3$ to $Q_{\text{sep}}$ by fixing Charlie's observables as $C_z=4A_z\otimes A_z$ (for $z=1\ldots,16$) and then maximizing over all pure ququart states $\rho_x^A$ and $\rho_y^B$. Next we carry out an extensive numerical search to support the claim in \eqref{qsep} that $Q_{\text{sep}}=Q_{\text{sep}}^M=4^3$, by optimizing over all possible choices of observables $\{C_z\}$.

\begin{prop} 
A lower bound for $Q_{\text{sep}}$ is given by $Q_{\text{sep}}^M = 4^3$, and this value can be attained using separable product states.
\label{prop4}
\end{prop}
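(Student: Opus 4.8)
The plan is to exhibit an explicit product-state strategy that saturates the claimed value, which immediately gives $Q_{\text{sep}}\ge 4^3$. By the reduction established just above, it suffices to work with the canonical witness \eqref{witcoeffcan}. I would fix Charlie's observables to the same product observables that realize the entangled value, $C_z=4\,A_z\otimes A_z$, and encode the inputs by conjugating a single fixed pure product state $|\psi\rangle\otimes|\psi\rangle$ with the Pauli unitaries $U_x=2A_x$ and $V_y=2A_y$. Concretely, I set $\rho_x^A=U_x|\psi\rangle\langle\psi|U_x^\dagger$ and $\rho_y^B=V_y|\psi\rangle\langle\psi|V_y^\dagger$; since each $U_x=2A_x=\sigma_{x_0}\otimes\sigma_{x_1}$ is a genuine (Hermitian) unitary, these are legitimate pure product messages, hence separable.

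Next I would compute the single-party Bloch components. From the conjugation identity \eqref{pauliconj} one has $A_xA_zA_x=\tfrac14\,\eta(x,z)\,A_z$ with a sign $\eta(x,z)=\pm1$, whence $\Tr(\rho_x^A A_z)=\eta(x,z)\,\langle\psi|A_z|\psi\rangle$ and likewise $\Tr(\rho_y^B A_z)=\eta(y,z)\,\langle\psi|A_z|\psi\rangle$. The decisive observation is that this same sign $\eta(x,z)$ already sits inside $w_{xyz}$ through the factor $\Tr(A_xA_zA_xA_z)=\tfrac14\,\eta(x,z)$. Hence in every term $w_{xyz}\Tr(\rho_x^A\otimes\rho_y^B\,C_z)$ the signs cancel via $\eta(x,z)^2\eta(y,z)^2=1$, so that all $16\times16$ input pairs add constructively instead of averaging out.

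The summand is then independent of $x$ and $y$, and summing over the $256$ pairs and over $z$ leaves $64\sum_{z}\langle\psi|A_z|\psi\rangle^2$. The closing step is the completeness relation $\sum_z\langle\psi|A_z|\psi\rangle A_z=|\psi\rangle\langle\psi|$, which holds because $\{A_z\}$ is an orthonormal operator basis; tracing it against $|\psi\rangle\langle\psi|$ gives $\sum_z\langle\psi|A_z|\psi\rangle^2=\Tr\!\big[(|\psi\rangle\langle\psi|)^2\big]=1$ for any pure state. The strategy therefore attains exactly $4^3$, establishing both the lower bound and its attainability by separable states.

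I expect the only genuine obstacle to be identifying the correct encoding rather than any calculation. A naive $x$-independent choice of messages fails, because $\sum_x\eta(x,z)$ vanishes for every label $z$ except the identity and collapses the value to a mere $16$. The crux is that conjugating by $U_x=2A_x$ is precisely what aligns the encoded Bloch components with the sign pattern of $w_{xyz}$, forcing the inputs to interfere constructively; once this choice is made, the rest is a direct application of \eqref{pauliconj} and the completeness of $\{A_z\}$.
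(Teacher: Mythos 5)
Your construction is correct and your computation checks out: for the canonical witness \eqref{witcoeffcan}, taking $\rho_x^A=U_x\proj{\psi}U_x^\dagger$, $\rho_y^B=V_y\proj{\psi}V_y^\dagger$ with $U_x=2A_x$, $V_y=2A_y$ and $C_z=4A_z\otimes A_z$ gives $w_{xyz}E_{xyz}=\tfrac{1}{4}\langle\psi|A_z|\psi\rangle^2$ for every input pair, since the conjugation signs $\eta(x,z)$ and $\eta(y,z)$ enter squared; summing over the $16^2$ pairs and using $\sum_z\langle\psi|A_z|\psi\rangle^2=\Tr\bigl[(\proj{\psi})^2\bigr]=1$ yields exactly $4^3$. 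This is in essence the paper's own attainability argument: the paper invokes Prop.~\ref{prop3} for the product state $(\ket{00}\bra{00})^{\otimes 2}$, which satisfies $S=1$, and that is precisely your strategy specialized to $\ket{\psi}=\ket{00}$. Your version is self-contained and slightly more general (any pure $\ket{\psi}$ works), and your side remark that an unencoded strategy collapses to the value $16$ is also correct.

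The gap is that you prove only half of the proposition as the paper construes it. In the text immediately preceding the statement, $Q_{\text{sep}}^M$ is \emph{defined} as the maximum of the witness over all pure product states with Charlie's observables fixed at $C_z=4A_z\otimes A_z$; the assertion $Q_{\text{sep}}^M=4^3$ therefore requires an upper bound as well, and that is where the paper's proof spends its effort: writing the witness value as $\sum_{xyz}H_{xyz}\Tr(2A_z\rho_x^A)\Tr(2A_z\rho_y^B)$ with $|H_{xyz}|=1/16$, then applying the triangle inequality, the Cauchy--Schwarz inequality over $z$, and the purity/completeness bound $\sum_z[\Tr(2A_z\rho)]^2=4\Tr(\rho^2)\le 4$ to conclude that no choice of product states can exceed $4^3$ under these observables. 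Your argument establishes $Q_{\text{sep}}^M\ge 4^3$, hence $Q_{\text{sep}}\ge 4^3$ and attainability, which is the operative content for Eq.~\eqref{QsepLB}; but without the Cauchy--Schwarz step the claimed equality $Q_{\text{sep}}^M=4^3$ is not fully proved. That step also matters for the paper's broader story, since it rules out product-state strategies beating $4^3$ at least for the canonical observables, consistent with the conjecture \eqref{qsep}.
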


\begin{proof}
Substitute the canonical witness coefficients~(\ref{witcoeffcan}) into the witness expression for separable states~(\ref{witsep}). We then obtain
\begin{equation}
\sum_{xyz}H_{xyz}\Tr{(2A_z\rho_x^A)}\Tr{(2A_z\rho_y^B)}, 
\label{witH}
\end{equation}
where 
$H_{xyz}=\Tr((A_zA_x)^2)\Tr((A_zA_y)^2)$. Since $|H_{xyz}| = 1/16$, we have that the expression~(\ref{witH}) is upper bounded by 
$(1/16)\sum_{xyz}\left|\Tr{(2A_z\rho_x^A)}\Tr{(2A_z\rho_y^B)}\right|$. Applying the Cauchy-Schwarz inequality and using the orthonormal and completeness properties of the observables~$\{A_k\}$, we arrive at the upper bound $ (1/16)\sum_{xy}\sqrt{\sum_z{(\Tr{(2A_z\rho_x^A)}})^2}\sqrt{\sum_z{(\Tr{(2A_z\rho_y^B)})^2}}\le (1/16)\sum_{x,y}(2\times 2)=4^3$.
Then, using Prop.~\ref{prop3}, the value $Q_{\text{sep}}^M=4^3$ can be recovered for any $4\times 4$ state $\rho_{AB}$ satisfying $S(\rho_{AB})=1$ in Eq.~\eqref{trS}. Such separable states, mixed or product, indeed exist, such a product state is $(\ket{00}\bra{00})^{\otimes 2}$.
\end{proof}

Although $Q_{\text{sep}}^M = 4^3 \le Q_{\text{sep}}$ by construction, numerical evidence indicates that equality holds. To find the tight bound~$Q_{\text{sep}}=4^3$ which holds for generic observables, we apply a see-saw iterative procedure~\cite{Werner2001,Liang2009,Pal2010} to optimize the canonical witness defined by the coefficients (\ref{witcoeffcan}). 

We find that, in addition to the observables $C_z=4A_z\otimes A_z$, the bound can also be obtained using measurements expressed in the standard basis. See the end of the section for further details on the corresponding optimal states and observables. For more on see-saw methods in the characterization of quantum correlations, see Ref.~\cite{Tavakoli2024semidefinite}. 
\\[6pt]
\noindent\textbf{A see-saw algorithm.}
The see-saw iterative method for optimizing a prepare-and-measure witness with fixed coefficients $w_{xyz}$ in Eq.~(\ref{wit}) over two-ququart separable states is as follows.
\begin{enumerate}
  \item Pick a set of random pure states $\{\rho_x^A\}_{x=1}^{16}$ and $\{\rho_y^B\}_{y=1}^{16}$ in Hilbert space of dimension four.
  \item Express the witness value as $w=\sum_{z=1}^{16}\Tr{(F_z C_z)}$, where $F_z=\sum_{xy}w_{xyz}\rho_x^A\otimes\rho_y^B$.
  \item Keeping $F_z$ fixed, maximize the objective value $w$ by optimizing over all operators subject to $-\id_4\le C_z\le\id_4$ for each $z=1,\ldots,16$. To do so, let $\sum_j{c_j^z\ketbra{\phi_j^z}{\phi_j^z}}$ be the singular value decomposition of $F_z$ and choose $C_z=\sum_j{\sign(c_j^z)\ketbra{\phi_j^z}{\phi_j^z}}$. 
  \item Rewrite the witness value as $w=\sum_x\Tr{(G_x \rho_x^A)}$, where $G_x=\sum_{yz}w_{xyz}\Tr_B{(\id_4\otimes \rho_y^B C_z)}$. Then, for each $x$, set $\rho_x^{\text{A}}=\ketbra{\psi_x}{\psi_x}$, where $\ket{\psi_x}$ is the eigenvector corresponding to the largest eigenvalue of $G_x$. 
  \item Similar to the previous step, update $\rho_y^B$ for fixed $\rho_x^A$ and $C_z$.
  \item Repeat steps 2-5 until convergence of the witness value $w$ in Eq.~(\ref{wit}) is reached.
\end{enumerate}

It is important to note that the see-saw method described above is heuristic and may get stuck in a local maximum, so the algorithm is typically repeated multiple times to ensure global optimality. Note that in our case, it was sufficient to apply the method only to the canonical witness with coefficients~\eqref{witcoeffcan}. By repeating the see-saw scheme, the algorithm has consistently converged to $w=4^3$. Similar see-saw-type algorithms have been used in various quantum correlation setups, including quantum metrology~\cite{Macieszczak2014,Toth2018,Toth2020activating,Kurdzialek2025}, EPR steering~\cite{Moroder2014steering}, prepare-and-measure scenarios~\cite{Tavakoli2017dim,Silva2023}, Bell nonlocality~\cite{Pal2010, Navascues2011}, and entanglement detection~\cite{Navascues2020genuine,Lukacs2022}, and usually a steady convergence of the algorithm has been observed.

With a slight modification of the above procedure, the witness can also be optimized in the case of a classical model. In this model, instead of quantum messages, Alice and Bob send four-dimensional classical messages to Charlie. The maximum witness value, $Q_{\text{c}}$ can be reproduced by using measurement observables $C_z$ that are diagonal in the computational basis with $\pm 1$ entries. Then the bound $Q_{\text{c}}$ can be computed using the same see-saw procedure, but by randomly initializing the pure states in step 1 to computation basis states, $\{\ket{i_{A,x}}\}_x$ and $\{\ket{i_{B,y}}\}_y$. 

In this way we find that the following symmetric classical deterministic strategies for Alice and Bob lead to the saturation of $Q_{\text{c}}=Q_{\text{sep}}=4^3$ in the canonical choice of the witness coefficients~(\ref{witcoeffcan}). Namely, Alice sends the message $m_A=x_1$ (encoded in the basis state $\rho_x^A=\ketbra{x_1}{x_1}$) to Charlie and Bob sends $m_B=y_1$ (encoded in the basis state $\rho_y^B=\ketbra{y_1}{y_1}$) to Charlie, where the inputs are $x=(x_0,x_1)$ and $y=(y_0,y_1)$.  If Charlie chooses an appropriate decoding strategy, as described in step 3 of the see-saw iterative method above, this procedure indeed yields $Q_{\text{c}}=Q_{\text{sep}}=4^3$. 

\subsection{Connection to the CCNR value}
\label{deeper}

It is natural to ask if there is a deeper conceptual reason behind the connection between the entanglement detection power of the correlation witness \eqref{wit} with coefficients given by \eqref{witcoeff}, and the CCNR criterion~(\ref{Qrhoab}), as expressed by Main Result~\ref{theorem1}. Here, we give such an explanation for the case of Bloch-diagonal states, where the $4\times 4$ state is written as $\rho_{AB}=\sum_z\lambda_z A_z\otimes B_z$, as in Eq.~(\ref{rhobloch}). Then, using the conjugation property of Pauli matrices~(\ref{pauliconj}), we observe that applying the unitaries $U_x=2A_x$ and $V_y=2B_y$ to the state $\rho_{AB}$ results in another state in Bloch-diagonal form:
\begin{equation}
\rho^{(x,y)}_{AB} = \sum_{z=1}^{16} s^{(x,y)}_z\lambda_z A_z\otimes B_z,    
\label{rhoxy}
\end{equation}
where $s_z^{(x,y)} = 16\Tr(A_xA_zA_xA_z)\Tr(B_yB_zB_yB_z)$ and take the values of $\pm 1$. Note that each state $\rho^{(x,y)}_{AB}$ has the same CCNR value of $\sum_z|\lambda_z|$. 

On the other hand, given a state in the Bloch-diagonal form \eqref{rhobloch}, the CCNR criterion can be associated with an entanglement witness operator~\cite{Guhne2006}
\begin{equation}
W_{\text{CCNR}} = \id_{16} - \sum_{z=1}^{16} \text{sgn}(\lambda_z) A_z \otimes B_z,     
\end{equation}
which ensures $\Tr(W_{\text{CCNR}}\rho_{\text{sep}})>0$ for all separable states. In the case of Bloch-diagonal states \eqref{rhobloch}, the witness evaluates to $\Tr(W_{\text{CCNR}}\rho_{AB})=1-\sum_z |\lambda_z|$, which is negative precisely when $\text{CCNR}(\rho_{AB})=\sum_z |\lambda_z|>1$.

Therefore, for the state \eqref{rhoxy} that violates the CCNR criterion, the corresponding entanglement witness is
\begin{equation}
\label{witxyccnr}
W^{(x,y)}_{\text{CCNR}} = \id_{16} - \sum_{z=1}^{16} \text{sgn}(\lambda_z) s^{(x,y)}_z A_z \otimes B_z.   
\end{equation}

Summing up the witnesses for each pair ($x,y$) and normalizing, we obtain
\begin{equation}
\label{witPM}
W_{\text{PM}}=\id_{16} - \frac{1}{4^3}\sum_{x,y,z} w_{xyz}2A_z\otimes 2B_z,    
\end{equation} 
where $w_{xyz}=(1/16)\text{sgn}(\lambda_z)s^{(x,y)}_z$, which corresponds to Eq.~\eqref{witcoeff}, and $C_z=2A_z\otimes 2B_z$ are Charlie's product observables. This expression exactly translates to the form of the prepare-and-measure witness~\eqref{wit} with $w_{xyz}$ given in~\eqref{witcoeff}. Then we have $\Tr(W_{\text{PM}}\rho_{AB})=1-\sum_z\left|\lambda_z\right|$. On the other hand, the separable bound 
$\Tr(W_{\text{PM}}\rho_{\text{sep}})\ge0$ follows directly, since each witness term $(x,y)$ in \eqref{witxyccnr} is non-negative for separable states. Thus, the prepare-and-measure witness in Eq. \eqref{witPM} perfectly functions as an entanglement witness when the observables are fixed as $C_z=4A_z\otimes B_z$ and the unitaries as $U_x = 2A_x$ and $V_y =2B_y$. In this case, the witness becomes equivalent to the CCNR criterion when applied to Bloch-diagonal states. 

Besides, in Proposition~\ref{prop4}, we also proved the separable bound of the correlation witness for the observables $C_z =4A_z\otimes B_z$ and showed that this bound can be saturated with specific separable states. Furthermore, using a see-saw optimization method, we have provided strong numerical evidence that this separable bound cannot be overcome even when allowing for arbitrary two-ququart observables beyond the fixed choice $C_z =4A_z\otimes B_z$. From this, the Main Result~\ref{theorem1} follows for Bloch-diagonal states.

\section{Examples of bound entangled states detected by the PM witness}
\label{examp}

In this section, we illustrate the power of our correlation witness based on four-dimensional quantum messages, using Main Result~\ref{theorem1}, on several bipartite entangled quantum systems. In Sec.~\ref{exampBD}, we present seven distinct families of $4\times 4$ entangled states in the Bloch-diagonal form of Eq.~(\ref{rhobloch}), among which four are PPT bound entangled. In Sec.~\ref{exampNBD}, we turn to PPT states with non-diagonal structure. Specifically, we provide examples of $4\times 4$ states, followed by $3\times 3$ states, embedded into the $4\times 4$-dimensional state space. Additionally, we present PPT bound entangled states in higher dimensions. Notably, we prove that for every dimension $D>4$, there exists a $D\times D$ bound entangled state whose entanglement is detected more robustly by our correlation inequality than by the violation of the CCNR criterion.  

\subsection{Bloch-diagonal examples}
\label{exampBD}

We consider distinct $4\times4$ entangled states $\rho_{\mathrm{AB}}^{(i)}$ for $i=1,\dots,7$, each admitting the Bloch-diagonal form \eqref{rhobloch}. The corresponding coefficients $\{\lambda_k\}$ are listed in Table~\ref{table1}, with the associated operators $A_k$ and $B_k$ specified in the table caption. Among these examples, four represent PPT bound entangled states.

Table~\ref{table2}, on the other hand, summarizes our numerical results for each state. Each row corresponds to a distinct state, for which we report the following entanglement parameters:
\begin{itemize}
  \item $\mathcal{N}$, the negativity of entanglement \cite{Vidal2002};
  \item the CCNR value, as given by the computable cross-norm (realignment) criterion for separability \cite{Rudolph2005,Chen2003}.
\end{itemize}

Note that for each state $\rho_{\mathrm{AB}}^{(i)}$ we have constructed a corresponding correlation witness with coefficients given in Eq.~\eqref{witcoeff}.  From Eq.~\eqref{Qrhoab} it then follows that
\begin{equation}
Q\bigl(\rho_{\mathrm{AB}}^{(i)}\bigr)
= \mathrm{CCNR}\bigl(\rho_{\mathrm{AB}}^{(i)}\bigr)\;\times\;Q_{\mathrm{sep}},    
\end{equation}
for $i=1,\dots,7$. Here we set $Q_{\mathrm{sep}}=4^3$, as established by the extensive see-saw optimization described in Sec.~\ref{Qsep}.  

\begin{table}[t!]
\centering
\begin{tabular}{c r r r r r r r} 
\toprule
No. & 1 & 2 & 3 & 4 & 5 & 6 & 7 \\
\midrule
$\lambda_k \backslash \text{State}$ & $\rho_{\text{ME}}$ & $\rho^{\text{W}}_{\text{as}}$ & $\rho^{\text{W}}_{\text{loc}}$ & $\rho^{\mathcal{F}}_{R6}$ & $\rho^{\mathcal{F}}_{R8}$ & $\rho_{\text{BPD}}$ & $\rho_{\text{Sentis}}$ \\
\midrule
$\lambda_1$  &  $1/4$  &  $1/4$  &  $1/4$ &  $1/4$  &  $1/4$   &  $1/4$  &   $1/4$ \\
$\lambda_2$  &  $1/4$  & $-1/12$ &  $-q$  &  $0$    &  $r_4$  &  $1/12$ &  $-s_1$ \\
$\lambda_3$  & $-1/4$  & $-1/12$ &  $-q$  &  $0$    & $-r_4$  & $-1/12$ &  $-s_1$ \\
$\lambda_4$  &  $1/4$  & $-1/12$ &  $-q$  & $-r_1$  &  $r_3$  & $-1/12$ &  $-s_1$ \\
$\lambda_5$  &  $1/4$  & $-1/12$ &  $-q$  &  $r_2$  &  $r_1$  & $-1/12$ &  $-s_1$ \\
$\lambda_6$  &  $1/4$  & $-1/12$ &  $-q$  &  $r_2$  &  $r_4$  &  $1/12$ &  $-s_1$ \\
$\lambda_7$  & $-1/4$  & $-1/12$ &  $-q$  &  $0$    & $-r_4$  &  $1/12$ &   $s_2$ \\
$\lambda_8$  &  $1/4$  & $-1/12$ &  $-q$  & $-r_2$  & $-r_1$  &  $1/12$ &   $s_3$ \\
$\lambda_9$  & $-1/4$  & $-1/12$ &  $-q$  &  $0$    & $0$     &  $1/12$ &  $-s_1$ \\
$\lambda_{10}$ & $-1/4$ & $-1/12$ &  $-q$  &  $0$    & $-r_4$  &  $1/12$ &   $s_3$ \\
$\lambda_{11}$ &  $1/4$ & $-1/12$ &  $-q$  &  $r_2$  & $-r_4$  &  $1/12$ &   $s_2$ \\
$\lambda_{12}$ & $-1/4$ & $-1/12$ &  $-q$  &  $0$    & $0$     & $-1/12$ &  $-s_1$ \\
$\lambda_{13}$ &  $1/4$ & $-1/12$ &  $-q$  &  $r_3$  & $0$     &  $1/12$ &   $s_2$ \\
$\lambda_{14}$ &  $1/4$ & $-1/12$ &  $-q$  &  $0$    & $-r_4$  &  $1/12$ &  $-s_1$ \\
$\lambda_{15}$ & $-1/4$ & $-1/12$ &  $-q$  &  $0$    & $-r_4$  & $-1/12$ &   $s_3$ \\
$\lambda_{16}$ &  $1/4$ & $-1/12$ &  $-q$  &  $r_1$  & $0$     &  $1/12$ &  $-s_1$ \\
\bottomrule
\end{tabular}
\caption{\label{table1} 
\textbf{Coefficients of the Bloch-diagonal $4\times 4$ states.} The table lists the coefficients $\lambda_k$ ($k=1,\ldots,16$) for the explicit decomposition of the entangled quantum states under study. The parameters are defined as: $q=19/340$, $r_1=(\sqrt{2}-1)/4$, $r_2=(2-\sqrt{2})/4$, $r_3=r_2-r_1$, $r_4=r_2/2$, and $s_1=0.0557066$, $s_2=0.0142664$, $s_3=0.0971467$. Each state $\rho_{AB}$ is defined in Eq.~\eqref{rhobloch} using the product of Pauli operators $A_k=B_k=(1/2)\sigma_{k_0}\otimes\sigma_{k_1}$ for $k=1,\ldots,16$, except for $\rho^{\mathcal{F}}_{R6}$ and $\rho^{\mathcal{F}}_{R8}$. For $\rho^{\mathcal{F}}_{R6}$, $B_6$ and $B_{11}$ are swapped with $A_{11}$ and $A_6$, respectively. For $\rho^{\mathcal{F}}_{R8}$, the swaps are $B_{10} \leftrightarrow A_{11}$, $B_{11} \leftrightarrow A_{10}$, and $B_{14} \leftrightarrow A_{15}$, $B_{15} \leftrightarrow A_{14}$.
}
\end{table}

To characterize the state, we introduce a visibility parameter $v$ by mixing $\rho_{\mathrm{AB}}^{(i)}$ with isotropic (white) noise:
\begin{equation}
\label{statemixing}
\rho_{\mathrm{AB}}^{(i)}(v)
= v\,\rho_{\mathrm{AB}}^{(i)} \;+\;(1 - v)\,\frac{\id_{16}}{16}\,,
\end{equation}
where $0\le v\le1$. Here we introduce several visibility thresholds for the mixed state $\rho_{\mathrm{AB}}^{(i)}(v)$ from Eq.~\eqref{statemixing}:
\begin{itemize}
  \item \emph{Critical visibility} $v_{\mathrm{PM}}$: the smallest $v$ for which our prepare-and-measure witness still detects entanglement. For Bloch-diagonal states, the value satisfies $\mathrm{CCNR}[\rho_{\mathrm{AB}}^{(i)}(v_{\mathrm{PM}})]=1$, according to Main Result~\ref{theorem1}. Noting that the maximally mixed state $\id_{16}/16$ has a single nonzero coefficient $\lambda_1=1/4$, one finds
  \begin{equation}
   v_{\mathrm{PM}}
    \;=\;\frac{3}{4\,\mathrm{CCNR}(\rho_{\mathrm{AB}}^{(i)}) - 1}\,;  
  \end{equation}
   
   \item \emph{Metrological threshold} $v_{\mathrm{metro}}$: the minimum visibility above which $\rho_{\mathrm{AB}}^{(i)}(v)$ outperforms all separable states in a quantum metrology task~\cite{Toth2020activating, Horodecki2021};

  \item \emph{LHV threshold} $v_{\mathrm{loc}}^{(\mathrm{proj})}$: a lower bound on the visibility below which $\rho_{\mathrm{AB}}^{(i)}(v)$ admits a local hidden-variable (LHV) model for projective measurements~\cite{Brunner2014Bell};

  \item \emph{Separability threshold} $v_{\mathrm{sep}}$: an (often tight) upper bound on the visibility below which the state becomes separable~\cite{Guhne2009,Horodecki2009}.
\end{itemize}

Before presenting the states case-by-case, we provide details on the calculation of $v_{\text{metro}}$. Using the numerical method introduced in Ref.~\cite{Toth2020activating}, we find that the optimal Hamiltonian takes the form 
\begin{equation}
H_{AB} = H_A\otimes \id_4 + \id_4 \otimes H_B,
\label{HDD}
\end{equation}
with $H_A = H_B = H$. For all states except $\rho_{R8}^{\mathcal{F}}$, the matrix is taken as $H = \id_2\times Z=\text{diag}(+1,-1,+1,-1)$. For $\rho_{R8}^{\mathcal{F}}$, the optimal choice is $H = Z\times \id_2=\text{diag}(+1,+1,-1,-1)$. In both cases, $H$ is a local, traceless observable with eigenvalues $\pm 1$, so that the separability limit for the quantum Fisher information is $\mathcal{F}_{\text{sep}}=8$~\cite{Toth2018}. Consequently, $v_{\text{metro}}$ is defined as the critical visibility in the mixture~(\ref{statemixing}) for which the maximal quantum Fisher information $\mathcal{F}_{Q}^{\text{max}}(\rho_{AB}^{(i)}(v))$ reaches 8. 

\begin{table}[t!]
\centering
\begin{tabular}{c c r r c c c c}
\toprule
No. & State & $\mathcal{N}$ & CCNR & $v_{\text{PM}}$ & $v_{\text{metro}}$ & $v^{(\text{proj})}_{\text{loc}}$ & $v_{\text{sep}}$ \\
\midrule
1 & $\rho_{\text{ME}}$ & $1.5$ & $4$ & $0.2000$ & $0.5509$ & $0.3611$ & $0.2000$ \\
2 & $\rho_{\text{as}}^{\text{W}}$ & $0.2500$ & $1.5$ & $0.6000$ & $0.8496$ & $0.7500$ & $0.2000$ \\
3 & $\rho^{\text{W}}_{\text{loc}}$ & $0.1471$ & $1.0882$ & $0.8947$ & - & $1.0000$ & $0.2983$ \\
4 & $\rho^{\mathcal{F}}_{R6}$ & $0$ & $1.0858$ & $0.8974$ & $0.9183$ & - & $0.7446$ \\
5 & $\rho^{\mathcal{F}}_{R8}$ & $0$ & $1.0858$ & $0.8974$ & $0.9183$ & - & $0.7446$ \\
6 & $\rho_{\text{BPD}}$ & $0$ & $1.5$ & $0.6000$ & - & - & $0.6000$ \\
7 & $\rho_{\text{Sentis}}$ & $0$ & $1.0856$ & $0.8976$ & - & - & $0.7814$ \\
\bottomrule
\end{tabular}
\caption{\label{table2} 
\textbf{Entanglement and critical noise parameters of Bloch-diagonal $4\times 4$ states.} Each state $\rho_{AB}^{(i)}$ ($i=1,\ldots,7$) is shown in a separate row and expressed in the Bloch decomposition~\eqref{rhobloch} with coefficients $\lambda_k$ and operators $A_k$, $B_k$ from Table~\ref{table1}. Columns $\mathcal{N}$ and CCNR report the entanglement negativity~\cite{Vidal2002} and the CCNR value~\cite{Rudolph2005,Chen2003}, which detect entanglement when $\mathcal{N}>0$ and $\text{CCNR}>1$. For $4\times 4$ systems, the maximum possible values are $\mathcal{N}=1.5$ and $\text{CCNR}=4$. Remaining columns list critical visibility parameters as defined in the main text.}
\end{table}

\smallskip
\noindent Detailed descriptions of all seven quantum states follow.
\smallskip

\noindent\textbf{1. Maximally entangled state:} The state $\rho_{\text{ME}} = \ketbra{\Phi_4^+}{\Phi_4^+}$ with $ \ket{\Phi_4^+}= 1/2\sum_{k=1}^4\ket{k}\ket{k}$ has Bloch-decomposition coefficients as given in the first column of Table~\ref{table1}. Its CCNR value is $\sum_{k=1}^{16} |\lambda_k| = 4$. When mixed with white noise,
\begin{equation}
\rho_{\text{ME}}(v) = v\ketbra{\Phi_4^+}{\Phi_4^+} + (1-v)\id_{16}/16,
\end{equation}
the resulting isotropic state is entangled for $v>0.2$, i.e., $v_{\text{sep}}=0.2$ (as determined by the PPT criterion~\cite{Peres1996}), and $v_{\text{PM}}=0.2$ coincides with this value. Moreover, there exists a local model for projective measurements for $v \le \sum_{k=2,3,4}1/(3k) \simeq 0.3611$~\cite{Almeida2007}. On the other hand, the state is useful for metrology with the Hamiltonian~(\ref{HDD}) if $v> (1/32)\times(7 + \sqrt{113}) \simeq  0.5509$ (see the Supp. Mat. in Ref.~\cite{Toth2020activating}).
\\[6pt]
\noindent\textbf{2. Antisymmetric $4\times 4$ Werner state:}  This state belongs to the $d\times d$ Werner family~\cite{Werner1989}
\begin{equation}
\rho_{\text{W}}(p) = p P_{\text{as}}/6 + (1-p)P_{\text{sym}}/10, 
\label{Wernerstate}
\end{equation}
where $P_{\text{as}} = (\id_4 - V)/2$ and $P_{\text{sym}} = (\id_4 + V)/2$ are the projectors on the respective antisymmetric and symmetric spaces, and $V = \sum_{ij} \ketbra{ji}{ij}$ is the swap operator. The state is defined as $\rho^{\text{W}}_{\text{as}}=P_{\text{as}}/6$ (corresponding to $p=1$), with a CCNR value of $\sum_{k=1}^{16} |\lambda_k| = 3/2$. When mixed with isotropic noise of weight $(1-v)$, the state remains within the Werner class and is known to be separable only when it is PPT, hence, it is entangled for $v>0.2$. It admits a local hidden variable model for projective measurements when $v\le 3/4$ (Ref.~\cite{Werner1989}) and it becomes metrologically useful when $v>0.8496$. Note, however, that $v_{\text{PM}}=0.6$, which is much higher than $v_{\text{sep}}=0.2$. These states exhibit a high degree of symmetry and shareability~\cite{Lancien2016}, and can also be used for quantum data hiding~\cite{DiVincenzo2002}.  
\\[6pt]
\noindent\textbf{3. $4\times 4$ Werner state with a local hidden variable model:} 
The state is defined as $\rho_{\text{loc}}^{\text{W}} = \rho_{\text{W}}(27/34)$ in Eq.~(\ref{Wernerstate}). The parameter $p=27/34$ marks the threshold below which Werner's local hidden variable model applies~\cite{Werner1989}. Although the mixture~(\ref{statemixing}) is entangled for $v>0.2983$ (by the PPT criterion~\cite{Peres1996}) and $v_{\text{PM}}=0.8947$, the state is not useful for metrology since its maximum quantum Fisher information (with Hamiltonian~(\ref{HDD})) is only $\mathcal{F}^{\text{max}}_Q = 5.2271<8$. 
\\[6pt]
\noindent\textbf{4. A metrologically useful rank-6 bound entangled state:}  Denoted by $\rho_{R6}^{\mathcal{F}}$, this state is given in Supp. Mat. of Ref.~\cite{Toth2018} and is extremal within the set of $4\times 4$ PPT entangled states~\cite{Badziag2014}. Ref.~\cite{Pal2021} shows that it is local unitarily equivalent to the state defined in Ref.~\cite{Horodecki2005}, which is part of a family of $2d\times 2d$ states with $d\ge 2$ known as private states. The separability threshold of the mixture~(\ref{statemixing}) is $v_{\text{sep}}=0.7446$ (determined by not having a two-copy PPT symmetric extension~\cite{Doherty2002}, implemented using the QETLAB package~\cite{qetlab}). On the other hand, the critical visibility of the state detected by our PM witness is $v_{\text{PM}}=0.8974$. The state has $\mathcal{F}^{\text{max}}_Q = 32 - 16\sqrt{2} \simeq 9.3726>8$ with the optimal Hamiltonian~(\ref{HDD}), where $H = \text{diag}(+1,+1,-1,-1)$, making it useful for metrology for $v>0.9183$. No local hidden variable model is known for this state. 
\\[6pt]
\noindent\textbf{5. A metrologically useful rank-8 bound entangled state:} The state $\rho_{R8}^{\mathcal{F}}$ has the same metrological performance as $\rho_{R6}^{\mathcal{F}}$ but differs in rank and is not invariant under partial transposition. An equivalent state up to local unitary transformations is given in Ref.~\cite{Pal2021}. The state is part of a $2d\times 2d$ family of states described in Section V of Ref.~\cite{Pal2021} and is also available in the QUBIT4MATLAB package~\cite{qubit4matlab} (see the routine \verb|BES_metro.m|). It has the same separability threshold $v_{\text{sep}}=0.7446$, PM threshold $v_{\text{PM}}=0.8974$ and maximum quantum Fisher information $\mathcal{F}_Q^{\text{max}} = 32 - 16\sqrt{2} \simeq 9.3726$ as for the state $\rho_{R6}^{\mathcal{F}}$. In this case, the optimal Hamiltonian~(\ref{HDD}) is $H = \text{diag}(+1,-1,+1,-1)$, and it is useful for metrology when $v>0.9183$. Again, we are not aware of any local hidden variable model for this state. 
\\[6pt]
\noindent\textbf{6. A $4\times 4$ Bloch-product-diagonal state:} Introduced in Ref.~\cite{Benatti2004}, the state $\rho_{\text{BPD}}$ when mixed with white noise is determined to be entangled (via the CCNR criterion~\cite{Rudolph2005,Chen2003}) for $v>0.6$, corresponding to $v_{\text{sep}}=v_{\text{PM}}=0.6$. The routine \verb|BES_CCNR4x4.m| in the QUBIT4MATLAB package~\cite{qubit4matlab} defines this state. Some other references on $\rho_{\text{BPD}}$ include Refs.~\cite{Moroder2012calibration,Lukacs2022,deGois2023}. Numerical analysis indicates that this state maximizes the CCNR value (at 1.5) among $4\times 4$ PPT entangled states~\cite{Lukacs2022}. A very similar PPT state is provided in Ref.~\cite{Moerland2024} with the same CCNR value. However, we are not aware of any local hidden variable model, and the state is not metrologically useful since $\mathcal{F}_Q^{\text{max}} = 5.3333<8$ for any Hamiltonian of the form~(\ref{HDD}).
\\[6pt]
\noindent\textbf{7. Sentis et al.~state:} The state $\rho_{\text{Sentis}}$ is defined in Ref.~\cite{Sentis2018} which detects bound entanglement with high statistical significance, making it robust for experimental certification. When mixed with white noise, it is determined to be entangled for $v>0.7814$ using the Breuer-Hall positive maps~\cite{Breuer2006, Hall2006}. It is not useful metrologically, as $\mathcal{F}_Q^{\text{max}} = 4.8913<8$ with any Hamiltonian of the form~(\ref{HDD}), and is not known to admit a local hidden variable model. The critical visibility of the mixture~(\ref{statemixing}) corresponding to both the CCNR criterion and to the prepare-and-measure setup is $v_{\text{PM}}=0.8976$.

\subsection{Non-Bloch-diagonal examples}
\label{exampNBD}

According to Main Result~\ref{theorem1}, the PM witness~\eqref{wit}
with coefficients~\eqref{witcoeff} detects entanglement of a $4\times 4$ state $\rho$ whenever the trace criterion
\begin{equation}
\label{Srho}
S(\rho) \;=\;\sum_{k=1}^{16}\bigl|t_{kk}\bigr| \;\le\;1
\end{equation}
is violated, where $t_{kk} \;=\;\Tr\bigl(\rho\,A_k\otimes B_k\bigr)$ are the diagonal entries of the correlation matrix $T=(t_{k\ell})$.

Although for non-Bloch-diagonal states this trace criterion is strictly weaker than the CCNR criterion (cf. Eq.~\eqref{JvN}), it still catches weakly entangled states. In particular, we demonstrate that it can certify PPT entanglement not only in $4\times4$ systems but even in $3\times3$ states with a non-diagonal Bloch structure.

Here we present three illustrative examples: In Sec. \ref{E4}, starting from the $\rho_{\text{BPD}}$ state (i.e., state No.~6 in Table~\ref{table1}), we mix in asymmetric noise to obtain a modified $4\times4$ state. In Sec.~\ref{E3}, through a brute force search combined with a semidefinite programming (SDP) algorithm~\cite{Vandenberghe1996}, we construct a $3\times3$ bound entangled state embedded into the $4\times4$ space. In Sec.~\ref{Ege4}, we begin with a known $4\times4$ Bloch-diagonal PPT state and add suitably chosen noise in a larger Hilbert space of dimension $D\times D$ for $D\ge4$. For every $D>4$, this procedure produces a PPT state whose entanglement is certified by our prepare-and-measure witness based on four-dimensional messages, whereas the CCNR criterion fails to detect it.

\subsubsection{Example for \texorpdfstring{$D = 4$}{D = 4}}
\label{E4}
Specifically, we consider the $4\times 4$ state 
\begin{equation}
\rho_{\text{asym}}(v) = v\rho_{\text{BPD}} + (1-v)\id_4\otimes\ketbra{00}{00}.   
\end{equation}
This state may be regarded as a noisy version of the Bloch-product-diagonal state $\rho_{\text{BPD}}$, where the added noise gives rise to asymmetric marginal terms in the correlation tensor. Consequently, for $v<1$, it can no longer be expressed in Bloch-diagonal form. A closed form expression for $S(\rho_{\text{asym}}(v))$ is given by 
\begin{equation}
S\left(\rho_{\text{asym}}(v)\right)=\sum_{k=1}^{16}|t_{kk}(v)|=\frac{3v}{2},
\end{equation}
and hence the threshold $S=1$ for the trace criterion \eqref{Srho} is reached at $v_{\text{PM}}=0.6$. On the other hand, at this same noise level, the CCNR value becomes 
\begin{equation}
\text{CCNR}(\rho_{\text{asym}}(v_{\text{PM}}))=\frac{7}{10}+\frac{\sqrt 3}{5}\simeq 1.0464>1.
\end{equation}
So the CCNR criterion, as expected, is more powerful than the trace criterion for detecting entanglement.

\subsubsection{Example for \texorpdfstring{$D = 3$}{D = 3}}
\label{E3}
We now embed a $3\times 3$ PPT state $\rho_{3\times3}$ into the $4\times 4$ state space and seek the maximal value of $S(\rho_{3\times 3})=\sum_{k=1}^{16}|t_{kk}|$. By Main Result~\ref{theorem1}, the prepare-and-measure witness detects entanglement as soon as $S>1$. A brute-force search shows that the optimal embedded state achieves $S(\rho_{3\times 3}) = \frac{5}{2} - \sqrt{2}\;\simeq\;1.0858$. The algorithm below generates this state by iteratively calling a single-shot semidefinite program to maximize $S(\rho) \;=\;\sum_{k=1}^{16} \bigl|t_{kk}\bigr|$ over all $3\times 3$ PPT states in the criterion \eqref{Srho}:
\smallskip

\noindent\textbf{A brute force SDP algorithm.} 

\begin{enumerate}
  \item Pick a sign vector $\vec s\in\{\pm 1\}^{16}$. We loop over all 16-element sign vectors $\vec s$ with $s_1= + 1$.  
  
  \item At each iteration, for the current sign vector $\vec s$, solve the following SDP:
  \begin{equation}\label{singleshotSDP}
	\begin{aligned}
		\max_{\{t_{kk}\}} \quad & \sum_{k=1}^{16}s_k t_{kk}\\
		\st \quad & \rho = \sum_{k\ell}t_{k\ell}A_k\otimes A_\ell,\\\quad & \rho = (F_A\otimes F_B) \rho (F_A\otimes F_B),\\
        \quad & \rho\succeq 0,\\
        \quad & \mathrm{Tr}(\rho)=1,\\
        \quad & \mathrm{PT}(\rho)\succeq 0.\\
    	\end{aligned}
\end{equation}
Here, the 3-dimensional embedding into the $4\times4$ space is enforced by the projectors $F_A=F_B=\mathrm{diag}(1,1,1,0)$, while the state $\rho$ is imposed to be positive, normalized, and PPT. Under these constraints, we solve a single-shot SDP to maximize the linear objective $S=\sum_{k=1}^{16} s_k t_{kk}$.

\item Track and update the best objective value $S$ and corresponding variables $\vec s$ and $t_{k\ell}$.

\end{enumerate}

As a result, we obtain the PPT state $\rho_{3\times3}$ that maximizes $S$ in \eqref{Srho} over all sign vectors $\vec s$. We give the state by the coefficients $t_{k\ell}$ of its correlation matrix $T$ in the decomposition $\rho_{3\times 3}=\sum_{k\ell}t_{k\ell}A_k\otimes A_\ell$. First, the diagonal entries $t_{kk}$ of $T$ read
\begin{equation}
\left(\frac{1}{4},r_4,-r_4,-r_3,0,r_4,r_4,0,0,r_4,r_4,0,r_1,r_4,-r_4,r_1\right).
\end{equation}
Next, the strictly upper-triangular nonzero off-diagonal elements are
\begin{equation}
\begin{aligned}
t_{1,13} &= \tfrac18,   &\quad t_{1,16} &= -\tfrac18,\\
t_{7,10} &= -r_4,         &\quad t_{6,11} &= r_4,\\
t_{2,14} &= r_4,        &\quad t_{3,15} &= -r_4,\\
t_{4,13} &= \tfrac{r_3}{2}, &\quad t_{4,16} &= -\tfrac{r_3}{2}.
\end{aligned}
\end{equation}
Since the correlation matrix $T$ is symmetric ($t_{k\ell}=t_{\ell k}$) in our case, the corresponding lower-triangular entries follow immediately. Here the parameters $r_i$ (in particular $r_3$ and $r_4$) are defined in the caption of Table~\ref{table1}. From the above specification, one finds 
\begin{align*}
\text{CCNR}(\rho_{3\times 3}) &\;=\;\|T\|_1 \;\simeq\;1.1163,\\
S(\rho_{3\times 3})&\;=\;\sum_{k=1}^{16} |t_{kk}|\;=\;\frac{5}{2}-\sqrt2\;\simeq\;1.0858.     
\end{align*}
Hence, $1 < S(\rho_{3\times 3}) < \text{CCNR}(\rho_{3\times 3})$, and both criteria successfully detect the entanglement of this $3\times 3$ PPT state.

\subsubsection{Example for \texorpdfstring{$D \ge 4$}{D ≥ 4}}
\label{Ege4}

We now demonstrate that our correlation witness can detect even higher than $4\times 4$ dimensional bound entangled states. In fact, whenever $Q(\rho_{AB})>Q_{\text{sep}}$, where $Q_{\text{sep}}=4^3$ in Eq.~(\ref{wit}), the state is detected entangled. For instance, this allows us to detect PPT bound entangled states in arbitrary high dimensions. To this end, let us pick a $4\times 4$ Bloch-diagonal PPT bound entangled state $\rho$ detected by the CCNR criterion, $\text{CCNR}(\rho)>1$: that is, we have $Q(\rho)>Q_{\text{sep}}$. Let us add a certain amount of ($1-v$) isotropic noise with $D\ge4$ to the $4\times 4$ state to obtain 
\begin{equation}
\rho_{D,v}=v\rho + (1-v)\frac{\id_{D^2}}{D^2},  
\label{rhoDAB}
\end{equation}
which remains PPT. Now let the message encoding maps of Alice and Bob be a fixed CPTP map followed by the unitary $U_x\otimes V_y$. In particular, let Alice and Bob apply a local CPTP map which projects the $4\times 4$-dimensional space where $\rho$ is living intact, and whenever the state lies outside that subspace, re-prepares the $4\times 4$ product state $\rho_A\otimes\rho_B$ as follows
\begin{equation}
\varepsilon(\rho_{D,v})
= P\,\rho_{D,v}\,P
\;+\;\Tr\bigl[(\id_{D^2} - P)\,\rho_{D,v}\bigr]\;\rho_A\otimes\rho_B,  
\label{epsrho}
\end{equation}
where $P$ is a projector $P: \mathbb{R}^{D\times D} \rightarrow \mathbb{R}^{4\times 4}$. Then we have 
\begin{equation}
\begin{aligned}
P\rho_{D,v}P&=v\rho + (1-v)\frac{\id_{4^2}}{D^2},\\
\Tr\bigl[(\id_{D^2} - P)\,\rho_{D_v}\bigr]&=(1-v)\left(1-\frac{4^2}{D^2}\right),
\end{aligned}
\label{stateproj}
\end{equation}
which results in the $4\times 4$ state
\begin{equation}
\varepsilon(\rho_{D,v})
= v\rho + (1-v)\frac{\id_{4^2}}{D^2}+(1-v)\left(1-\frac{4^2}{D^2}\right)\rho_A\otimes\rho_B.
\label{stateprojfull}
\end{equation}
Let us now consider two situations depending on the form of the re-prepared product states.
\\[6pt]
\noindent\textbf{Case of isotropic noise.} Let us have $\rho_A=\rho_B=\id_4/4$. In that case $\varepsilon(\rho_{D,v})$ becomes
\begin{equation}
\varepsilon(\rho_{D,v})
= v\rho + (1-v)\frac{\id_{4^2}}{4^2}.
\label{stateprojiso}
\end{equation}
Note that the state after the CPTP map does not depend on $D$. We then apply the local unitaries $2A_x$ and $2B_y$ on the state \eqref{stateprojiso}, and then measure the observable $C_z=4A_z\otimes B_z$. Since the resulting state is Bloch-product-diagonal, our correlation witness detects entanglement according to Main Result~\ref{theorem1} as soon as
\begin{equation}
\text{CCNR}(\epsilon(\rho_{D,v})) = v\,\mathrm{CCNR}(\rho) + \frac{(1-v)}{4}
> 1,
\label{ccnr_iso}
\end{equation}
which rearranges to
\begin{equation}
v_{\text{PM}} \;=\;
\frac{3}{4\,\mathrm{CCNR}(\rho) - 1}
\end{equation}
for the critical visibility defined by $\text{CCNR}(\epsilon(\rho_{D,v_{\text{PM}}}))=1$. This formula tells us that the critical visibility $v_{\text{PM}}$ is strictly less than unity whenever $\text{CCNR}(\rho)>1$ for all $D\ge 4$. So, by applying the map~(\ref{epsrho}) with $\rho_A=\rho_B=\id_4/4$, one can achieve a sizable violation of the correlation inequality for high-dimensional PPT states~(\ref{rhoDAB}). 

Also note that the CCNR value of the state~(\ref{rhoDAB}) is 
\begin{equation}
\label{ccnrvD}
\text{CCNR}(\rho_{D,v})=v\text{CCNR}(\rho)+\frac{1-v}{D},     
\end{equation}
which is strictly smaller than the expression in \eqref{ccnr_iso} for $D>4$. So, for all $D>4$, when applied to the family of states~\eqref{rhoDAB}, our prepare-and-measure witness strictly outperforms the CCNR criterion.  In fact, for each $D>4$ there exists a PPT state within the family \eqref{rhoDAB} whose entanglement is revealed by our correlation witness yet goes undetected by the CCNR test.
\\[6pt]
\noindent\textbf{Case of product noise.} Let us choose the state $\rho_{\text{BPD}}$ in Eq. \eqref{rhoDAB}, for which $\text{CCNR}(\rho_{\text{BPD}})=3/2$, and set $\rho_A=\rho_B=\ketbra{0}{0}$ in the map~(\ref{epsrho}). With these choices, the witness value $S(v,D)$ in \eqref{Srho} evaluates to 
\begin{equation}
S(v,D)=\frac{2v}{3}+\frac{3}{4}-\frac{8(1-v)}{D^2}+\left|\frac{v}{3}-\frac{1}{4}+\frac{4(1-v)}{D^2}\right|.    
\end{equation}
Then, for $D\to\infty$ we have
\begin{equation}
S(v,\infty) =
\begin{cases}
\frac{3v + 2}{3}, & \text{if } v < \frac{4}{3} \\
\frac{v + 2}{3}, & \text{if } v \geq \frac{4}{3}.
\end{cases}    
\end{equation}
Hence, for $\rho_{\mathrm{BPD}}$ in the family \eqref{rhoDAB}, our prepare-and-measure witness detects entanglement for any visibility $v$ satisfying $S(v,\infty)>1$, which here holds for all $v>0$. In contrast, the CCNR criterion \eqref{ccnrvD} only detects entanglement in the limit $D\to\infty$ for $v>2/3$.  Thus, for this particular high-dimensional state, our correlation witness is significantly more sensitive than the CCNR test.

\section{Proposal and feasibility of experimental implementations}

In recent years, impressive advances have been achieved in the preparation, manipulation, and observation of high-dimensional quantum systems~\cite{Friis2018}.

In addition, the distribution of high-dimensional entanglement over long distances has become available, for example via multicore fibers~\cite{Hu2020, Hu2022} and has been complemented by pioneering demonstrations of spectrally indistinguishable photons across long distances~\cite{Zhan2025}. Quantum computers based on higher-dimensional systems, i.e. qudits rather than qubits, are also coming into sight~\cite{Lanyon2008}. Partly motivated by these advances, substantial efforts have recently been made to prepare and observe bound entangled states, initially in the multipartite setting and, more recently, in the bipartite setting.

Multipartite bound entangled states have been realized in several platforms: a four-qubit Smolin state~\cite{Smolin2001,Augusiak2006} was prepared both with polarization entangled photons~\cite{Amselem2009,Lavoie2010,Dobek2013} and a string of trapped ions~\cite{Barreiro2010}, and a three-qubit system was generated in a liquid-state NMR system~\cite{Kampermann2010}. Importantly, in the multipartite setting above, entanglement may still be distilled if two of the parties work together. Later the fundamentally different bipartite case was explored experimentally, though to a lesser extent: DiGuglielmo et al.~\cite{DiGuglielmo2011} produced bound entangled Gaussian states using multimode entangled light, and Hiesmayr et al.~\cite{Hiesmayr2013} witnessed two-qutrit bound entangled states in their orbital angular momentum.

In the following, we propose to generate the Bloch-product-diagonal bound entangled state $\rho_{\text{BPD}}$ (state No.~6 in Table~\ref{table1}), which, according to numerical evidence, exhibits the highest CCNR value among $4\times 4$ PPT states, and to witness its correlations in our prepare-and-measure setup using polarization-entangled photons. 

We note that entanglement-witness setups~\cite{Amselem2009,Dobek2013} with similar photonic implementation have previously been used to detect multipartite bound entanglement in the noisy four-qubit Smolin state~\cite{Smolin2001}.

Let us first define the four maximally entangled Bell pairs:
\begin{align}
  \ket{\Psi_{1,2}} &= \ket{\Psi^{\pm}}
    = \frac{\ket{01}\pm\ket{10}}{\sqrt{2}}, \\[0.5ex]
  \ket{\Psi_{3,4}} &= \ket{\Phi^{\pm}}
    = \frac{\ket{00}\pm\ket{11}}{\sqrt{2}}.
\end{align}

The bound entangled state $\rho_{\mathrm{BPD}}$ is then given by
\begin{align}
  \rho_{\mathrm{BPD}}
  =& \sum_{i=1}^3 \frac{1}{6}\,\ket{\Psi_i}\bra{\Psi_i}_{AB}
       \otimes \ket{\Psi_i}\bra{\Psi_i}_{A'B'}\nonumber\\
    & + \sum_{i=1}^3 \frac{1}{6}\,\ket{\Psi_4}\bra{\Psi_4}_{AB}
       \otimes \ket{\Psi_i}\bra{\Psi_i}_{A'B'}\,, 
  \label{state_bellprod}
\end{align}
where $(AA')$ and $(BB')$ denote the ququart spaces of Alice and Bob, respectively. To prepare this state, we start from the double Bell-product
\begin{equation}
  \ket{\Psi_1}_{AB} \;\otimes\; \ket{\Psi_1}_{A'B'} 
  \label{state_psi1double}
\end{equation}
and with equal probability apply one of the following two procedures:

\begin{enumerate}
  \item With probability $1/2$: choose $k\in\{0,1,3\}$ uniformly at random, and apply the Pauli unitary $\sigma_k$ on both $\ket{\Psi_1}_{AB}$ and $\ket{\Psi_1}_{A'B'}$ to obtain
    \begin{equation}
      (\sigma_k \otimes \id)\,\ket{\Psi_1}_{AB}
      \;\otimes\;
      (\sigma_k \otimes \id)\,\ket{\Psi_1}_{A'B'}\,,   
    \end{equation}
    where $\sigma_0 = \id$.
  
  \item With probability $1/2$: first rotate $\ket{\Psi_1}_{AB}$ to $\ket{\Psi_4}_{AB}$ via $\sigma_2$. Then choose $k\in\{0,1,3\}$ uniformly at random and apply the unitary $\sigma_k$ to $\ket{\Psi_1}_{A'B'}$, yielding
    \begin{equation}
         \ket{\Psi_4}_{AB}
      \;\otimes\;
      (\sigma_k \otimes I)\,\ket{\Psi_1}_{A'B'}\,. 
    \end{equation}
\end{enumerate}
This procedure yields the state in Eq.~\eqref{state_bellprod}.

In our proposed experiment for generating $\rho_{\text{BPD}}$, we take an approach similar to Refs.~\cite{Amselem2009,Lavoie2010}. These works use spontaneous parametric down conversion to produce the two-photon Bell-product state~(\ref{state_psi1double}) in four distinct spatial modes driven with a single laser pump. We then apply the Pauli unitaries $\sigma_k$ via single-qubit bit-flip and phase-shift gates, and use a random-number generator to select among them as described above. As a result, this experimental procedure generates the bound entangled state given in Eq.~(\ref{state_bellprod}). 

On the other hand, the inputs $x$, $y$ and $z$ for our correlation witness are drawn from a random-number generator. The product Pauli unitaries $U_x=\sigma_{x_0}\otimes\sigma_{x_1}$ and $V_y=\sigma_{y_0}\otimes\sigma_{y_1}$ are implemented on the respective $AA'$ and $BB'$ shares of $\rho_{\text{BPD}}$ using single-qubit bit-flip and phase-shift gates. The outgoing photons are then coupled into single-mode fibers and sent to local polarization analyzers. There, avalanche photodiodes register four-fold coincidences, allowing us to measure the observables $C_z=\sigma_{z_0}\otimes\sigma_{z_1}\otimes\sigma_{z_0}\otimes\sigma_{z_1}$ on the full $AA'BB'$ system. 

To test for bound entanglement we can proceed in two steps. 
(i) We directly measure the correlation witness from Eqs.~(\ref{wit},\ref{witcoeff}). For each input pair $(x,y)$, we perform 16 measurements, yielding a total of $16^3$ measurements for all choices $x,y,z$. If the resulting witness exceeds the separable bound $Q_{\text{sep}}=4^3$, the prepared state is certified as entangled. (ii) The density matrix is reconstructed by full state tomography, which requires $16^2$ different measurement settings. We then check whether the reconstructed state is positive under partial transpose.

However, because $\rho_{\text{BPD}}$ has a non-full-rank partial transpose, verifying the PPT condition is highly sensitive to experimental noise (see Ref.~\cite{Lavoie2010NP}). To reach PPT one can add white noise to the state $\rho_{\text{BPD}}$ as has been done in Refs.~\cite{Lavoie2010, Dobek2013}. By choosing the noise weight appropriately, one can simultaneously (i) violate the correlation witness and (ii) enforce the PPT property, thus providing a robust experimental demonstration of bound entanglement within the prepare-and-measure setup.

Given the rapid experimental progress in photonic entanglement, we expect that a polarization-photon experiment to detect bound entanglement in the PM setup will soon be within experimental reach. Moreover, any other $4\times 4$ Bloch-diagonal state violating the CCNR criterion, such as the metrologically useful $\rho_{R6}^{\mathcal{F}}$ or $\rho_{R8}^{\mathcal{F}}$ from Table~\ref{table1}, can be generated and observed by a similar procedure: randomizing Bell-product pairs and then applying our correlation test. 

Furthermore, other experimental platforms, both photonic and non-photonic, also hold great promise~\cite{Friis2018}. In photonic systems high-dimensionally entangled states can be produced in the temporal domain, the frequency domain, via orbital angular momentum entanglement, or by combining these degrees of freedom to create hyperentangled states~\cite{Barreiro2005,Wang2015a}. Non-photonic platforms likewise may provide routes to complex high-dimensional entanglement (including bound entanglement), for example with Cesium atoms~\cite{Anderson2015}, transmon superconducting qubits~\cite{Kumar2016}, or nitrogen-vacancy centers~\cite{Wang2015b}.

\begin{table}[t]
  \centering
  \sisetup{round-mode=places, round-precision=5}
  \begin{tabular}{c S[table-format=1.5] S[table-format=1.5]}
    \toprule
    {$d$} & {CCNR (PM witness)} & {CCNR (Bell-diagonal)} \\
    \midrule
    3 & 1.16646 & 1.18585 \\
    5 & 1.31934 & 1.41464 \\
    7 & 1.43684 & 1.52053 \\
    \bottomrule
  \end{tabular}
  \caption{
    Comparison of CCNR values in dimensions \(d=3,5,7\): 
    the second column lists the CCNR value of the PPT state maximally violating the Carceller-Tavakoli (PM) witness, and the third gives the CCNR value of the PPT state within the set of $d\times d$ Bell-diagonal states. The PPT state (state No. 6 in Table~\ref{table1}) maximally violating our correlation witness for $d=4$ gives the CCNR value of $1.5$ which is maximal among $4\times 4$ Bell-diagonal PPT states.}
  \label{table0}
\end{table}

\section{Discussion}

We have introduced a family of linear correlation witnesses for a three-party prepare-and-measure scenario with a four-dimensional message space. These witnesses are capable of detecting entanglement in broad families of weakly entangled $4\times 4$ states, as well as in higher-dimensional states with a positive partial transpose (PPT). Tailored to a target entangled state, they perform best for Bloch-diagonal states. Requiring only simple Pauli rotations and product-Pauli measurements, this method provides a versatile and experimentally friendly tool for entanglement detection, robust again realistic noise. In this setting, Main Result~\ref{theorem1}, which represents the central finding of this work, establishes a direct link between the power of the proposed linear correlation witness and the violation of the CCNR criterion.

To test the performance of our approach, we applied Main Result~\ref{theorem1} to show that the introduced prepare-and-measure witness can detect several $4\times 4$ PPT bound entangled states, including both Bloch-diagonal and non-Bloch-diagonal examples, each mixed with varying amounts of isotropic noise. Although the separable bound for our correlation inequality is obtained via a heuristic method, it consistently reproduces the conjectured bound. Moreover, we demonstrated that our prepare-and-measure witness can detect bound entangled states in every finite dimension, from $D = 3$ up to the asymptotic limit $D \to \infty$, and that it outperforms the CCNR criterion for a family of $D\times D$ bound entangled states whenever $D > 4$. However, unlike the Carceller-Tavakoli witness~\cite{Carceller2025PRL}, our construction does not readily generalize to arbitrary message dimensions (such as odd primes or their products), due to its special group-theoretic structures. 

Our scheme and the Carceller-Tavakoli witness schemes share several notable features. Both exhibit strong robustness to noise that well exceeds the thresholds known for the violation of bipartite Bell nonlocality with bound entanglement~\cite{Vertesi2014,Yu2017,Pal2017family}, and therefore making them promising candidates for implementations. In addition, the measurements used in both Ref.~\cite{Carceller2025PRL} and our scheme are separable, with ours being product measurements, and are considerably simpler to implement than entangled measurements. 

However, a key difference in our protocol lies in its simplicity: while the Carceller-Tavakoli witness involves $d$-outcome measurements (for $d$ odd prime), our witness requires only two-outcome measurements to reveal bound entanglement. Another difference concerns the CCNR values of the states optimally violating the witnesses. In particular, we used an iterative algorithm similar to that of Ref.~\cite{Lukacs2022} to maximize the CCNR value over $d\times d$ Bell-diagonal PPT states for $d=3,5,7$, and we compared these maxima to those that achieve the largest violation of the Carceller-Tavakoli witness. As summarized in Table~\ref{table0}, for each tested dimension $d\in\{3,5,7\}$, the Bell-diagonal state that attains the highest CCNR value is distinct from the one that maximally violates the Carceller-Tavakoli witness. This highlights that this witness is less directly tied to the CCNR criterion than our correlation witness.

Notably, for $d=4$, the Bloch-product-diagonal (and hence Bell-diagonal) PPT state (state No. 6 in Table~\ref{table1}) achieving the maximal CCNR value of 1.5 also exactly saturates our prepare-and-measure witness among all PPT states. We further proposed a photonic implementation of this correlation witness which appears to be a promising platform for signaling entanglement in this highly noise-robust bound entangled state. 

Finally, the strong connection we explored between our prepare-and-measure witness and the CCNR criterion for $4\times 4$ Bloch-product-diagonal states (with message dimension four) raises intriguing questions: First, can this link be extended to more general Bloch-diagonal states with higher channel capacity and local state dimensions? Second, might bound entanglement serve as a useful resource in prepare-and-measure protocols, enabling to design novel noise resistant quantum communication tasks?

\subsection*{Data Availability Statement}

The data that support the findings of this study are openly available at the following URL/DOI:
\url{https://github.com/istvanmarton/Classical_bounds_EAPM}.

\subsection*{Acknowledgments}

We acknowledge the support of the EU (CHIST-ERA MoDIC) and the National Research, Development and Innovation Office NKFIH (No.~2023-1.2.1-ERA\_NET-2023-00009 and No.~K145927). T.V. acknowledges support from the `Frontline' Research Excellence Program of the NKFIH (No.~KKP133827). I.M. acknowledges support from the J\'anos Bolyai Research Scholarship of the Hungarian Academy of Sciences.

%

\end{document}